\definecolor{darkblue}{rgb}{0.1,0.1,0.8}
\definecolor{brickred}{rgb}{0.8, 0.25, 0.33}
\definecolor{DarkGreen}{rgb}{0,0.6,0}
\newtheorem{theorem}{Theorem}
\newtheorem{lemma}[theorem]{Lemma}
\newtheorem{corollary}[theorem]{Corollary}
\newtheorem{definition}[theorem]{Definition}
\newtheorem{remark}[theorem]{Remark}
\def\old@comma{,}
    \old@comma\discretionary{}{}{}%
\newcommand{\mH}{\mathsf{H}}
\newcommand{\mD}{\mathcal{D}}
\newcommand{\mP}{\mathcal{P}}
\newcommand{\mL}{\mathcal{L}}
\newcommand{\mX}{\mathcal{X}}
\newcommand{\mS}{\mathcal{S}}
\newcommand{\mN}{\mathcal{N}}
\newcommand{\mT}{\mathcal{T}}
\newcommand{\mM}{\mathcal{M}}
\DeclareMathOperator{\tr}{\mathrm{Tr}}
\newcommand{\Eutr}{\mathbb{E}_U\mathrm{Tr}}
\newcommand{\supp}{\mathrm{supp}}
\newcommand{\eig}{\mathrm{eig}}
\newcommand{\hsig}{\hat{\sigma}}
\newcommand{\hrho}{\hat{\rho}}
\newcommand{\htau}{\hat{\tau}}
\newcommand{\hmT}{\hat{\mT}}
\newcommand{\tlD}{\tilde{D}}
\newcommand{\tlQ}{\tilde{Q}}
\newcommand{\tlH}{\tilde{H}}
\newcommand{\tlmT}{\tilde{\mathcal{T}}}
\newcommand{\tlsig}{\tilde{\sigma}}
\newcommand{\tlrho}{\tilde{\rho}}
\newcommand{\tltau}{\tilde{\tau}}
\newcommand{\tenn}{{\otimes n}}
\newcommand{\id}{\mathrm{id}}
\newcommand{\de}{\mathrm{d}}
\newcommand{\rp}{\mathrm{rp}}
\newcommand{\da}{\partial_\alpha}
\newcommand{\tHa}{\tlH_\alpha^\uparrow(R|B)}
\newcommand{\Ha}{H_\alpha}
\newcommand{\eigrho}{\eig(\rho)_{\min}}
\begin{document}
\title{Quantum Soft Covering with Relative Entropy Criterion\vspace{2ex}} 

% %%% Single author, or several authors with same affiliation:
\author{%
 \IEEEauthorblockN{Xingyi He and S. Sandeep Pradhan}\\
 \IEEEauthorblockA{Department of Electrical Engineering and Computer Science, University of Michigan, USA\\
                   Email: \{xingyihe,  pradhanv\}@umich.edu}

\thanks{A preliminary version of this work was presented in part at the 2024 IEEE International Symposium on Information Theory (ISIT), 7-12 July 2024, Athens, Greece, doi:10.1109/ISIT57864.2024.10619111. \cite{he2024}}
\thanks{This work was supported in part by NSF grant CCF-2132815.}}

\thispagestyle{empty} 

\maketitle

\vspace{-1.5\baselineskip}

\begin{abstract}
   In this work, we propose a soft covering problem for fully quantum channels using relative entropy as a criterion for operator closeness. We establish covering lemmas by deriving one-shot bounds on the achievable rates in terms of smooth min-entropies. In the asymptotic regime, we show that the infimum of the rate, defined as the logarithm of the minimum rank of the encoded input state, is given by the minimal coherent information between the reference and output systems that yields the target output state. Furthermore, we present a one-shot quantum decoupling theorem that also employs a relative-entropy criterion. Due to the Pinsker inequality, our one-shot results based on the relative-entropy criterion are tighter than the corresponding results based on the trace norm considered in the literature. In addition, we establish achievable error exponents and second-order rates for quantum soft covering under both trace-distance and relative-entropy criteria.
\end{abstract}

\section{Introduction}

The quantum soft covering problem, also known as the channel output simulation problem, can be stated as follows. Suppose we are given a quantum channel and an objective density operator, we aim to generate an encoded state whose output approximates the target density operator. In the classical setting, a density operator reduces to a probabilistic distribution, and such a covering problem has been broadly investigated in the literature. The earliest work dates back to Wyner in his seminal paper \cite{wyner1975common}, where the problem was formulated as determining the rate of shared randomness required for two agents to generate classical i.i.d. samples of local random variables. The optimal rate of initial shared randomness is now known as Wyner’s common information. In subsequent works 
\cite{han2002approximation,hayashi2006general,bloch2013strong,cuff2009communication,cuff2013distributed,cuff2016soft}, the essence of soft covering was abstracted from this multiuser perspective and developed into a simulation problem based on a sphere-covering viewpoint. It is well known that the minimal achievable rate is given by the smallest mutual information between the channel input and output corresponding to the input distribution that generates the target output distribution. In the quantum setting, relevant discussions are generally classified into two genres based on the type of channel: classical-quantum (CQ) channels and fully quantum channels. Soft covering for CQ channels was first introduced in \cite{ahlswede2002strong}, where it was shown that the minimal achievable rate is characterized by replacing the classical mutual information by Holevo information, which serves exactly as the CQ analogue of the former. This result was later investigated and applied in many related works \cite{ahlswede2002strong,winter2004extrinsic,devetak2003classical,winter2005secret,devetak2005distillation,cai2004quantum,bennett2014quantum,luo2009channel,radhakrishnan2017one,shen2024optimal,hayashi2025resolvability,takahashi2026classical} \cite[Ch.~17]{wildebook}. 
The notion of soft covering for fully quantum channels was proposed recently \cite{atif2024quantum}, where the corresponding rate threshold is given by the coherent information. This highlights the duality connection with the fully quantum channel coding result. While the former surprisingly admits a single-letter expression, no such characterization exists for the latter.
Meanwhile, for quantum-classical (QC) channels, a related problem, known as measurement simulation, has also been studied in the literature \cite{winter2004extrinsic}\cite{2001compression,massar2000amount,wilde2012information,anshu2019convex}. 
Moreover, beyond the soft covering setting, a more general problem known as channel resolvability has also been introduced and studied extensively \cite{han2002approximation}\cite{hayashi2006general}\cite{hayashi2025resolvability}\cite{hayashi2012quantumwiretap,hayashi2006formula,watanabe2014strong} \cite[Sec.~9.4]{hayashi-book}. It is worth noting that channel resolvability is closely related to soft covering in the sense that the former deals with simulating all output states, including non-product ones, while the latter focuses solely on product ones. 

In the soft covering problem, it is important to specify a criterion of operator closeness to quantify how far the simulated state is from the target. A measure of closeness that has been applied in most of the literature is the trace distance (Schatten-1 norm). However, a stronger criterion that is arguably more information-theoretic is the relative entropy, which is tighter than the trace norm due to the Pinsker inequality. Furthermore, relative entropy enjoys certain useful properties such as recursivity \cite{csiszar2008axiomatic} and infinite dynamic range that may make it more attractive in certain applications, such as channel resolvability, secret communication with more nuanced discrimination between density operators.
In fact, the first work on the soft covering problem in the classical setting by Wyner \cite{wyner1975common}, considered the normalized relative entropy under an $n$-shot transmission. This was generalized to the unnormalized case in a subsequent work \cite{hayashi2006general}. Tighter criteria for classical channels, such as $f$-divergence or R\'enyi divergence \cite{liu2016e_,yu2018wyner,yu2018renyi,yu2019simulation,yu2025renyi,li2025two} have also been used to quantify the simulation error. Under these measures, the achievable rate may shift from the mutual information to other relevant quantities. 
For CQ channels, soft covering under relative-entropy error continues to have the Holevo information as the achievable rate infimum, and one-shot bounds have also been established \cite{hayashi2012quantumwiretap}. For fully quantum channels, the recently introduced soft covering problem \cite{atif2024quantum} still uses the trace-distance criterion, while the tighter covering behavior under the relative-entropy criterion remains unresolved.

In this work, we answer this question and formulate the soft covering problems for fully quantum channels with relative entropy as a measure of operator closeness. We characterize the achievable rate infima, i.e., the minimum rank of the encoded state to realize one-shot and asymptotic coverings, respectively. Specifically, in the one-shot case, the minimum rank is characterized by the smooth min-entropy between the reference and output systems (Theorem \ref{thm-cover-1shot}); in the asymptotic regime, this reduces to the single-letter coherent information between these two systems owing to the asymptotic equipartition property (AEP) of the smooth R\'enyi entropy (Theorem \ref{thm-cover-asym}). This is identical to the trace-distance soft covering result \cite[Thm.~3.5]{atif2024quantum}, demonstrating that when switching from the looser trace-distance criterion to the tighter relative-entropy criterion, the coherent information asymptotically remains sufficient.
Toward obtaining these results, we first propose a continuity bound (Lemma \ref{lem-contin-rela}) and a quadratic upper bound (Lemma \ref{lemma-D<Q2}) on relative entropy for subnormalized states. The proof of the one-shot result then proceeds by applying smoothing techniques \cite{renner2008security,tomamichel2010duality,dupuis2014one,dupuis-thesis,berta2009single,datta2012one,datta2013smooth,datta2014limit}, together with a quantum decoupling result for a measurement channel. Furthermore, by virtue of a proposed reverse Pinsker inequality (Lemma \ref{lem-revPinsker}), a similar one-shot relative-entropy bound for soft covering can be obtained in a straightforward manner. However, this bound can be considerably loose due to its dependence on the minimal eigenvalue of the target state. A detailed comparison between these two one-shot results is provided in Section \ref{sec-main-compare}.

It is noteworthy that, in our proof of the quantum soft covering problem, quantum decoupling \cite{dupuis2014one}\cite{dupuis-thesis}\cite{horodecki2007quantum,abeyesinghe,hayden2008decoupling,berta2011quantum,majenz2017catalytic,wakakuwa2021oneshot,dupuis2023privacy,li2024reliability,cheng2024joint} serves as a powerful tool, since an achievable quantum soft covering protocol approximately realizes a decoupling between the reference system and the target system with respect to a measurement channel. Similar observations have also been made in the quantum channel coding problem, e.g., \cite{hayden2008decoupling}. This indicates that our formulation of soft covering can be considered as a dual to the channel coding framework. On the other hand, the notion of quantum decoupling studied in the literature is again grounded on a trace-distance criterion. Hence, to fully leverage this decoupling tool in our relative-entropy soft covering setting, based on the formulation, we also establish a one-shot quantum decoupling result with relative entropy criterion (Theorem \ref{thm-decouple-1shot}). 

Furthermore, in the classical and CQ covering problem, when the code rate exceeds the corresponding rate threshold (mutual information for classical channels or Holevo information for CQ channels), the covering error decays exponentially fast, and an achievable error exponent (also referred to as the reliability function) has been investigated in the classical \cite{hayashi2006general}\cite{yu2018renyi}\cite{yu2025renyi}\cite{li2025two}\cite{yassaee2019almost,yagli2019exact,parizi2016exact} and CQ \cite{cheng2023error,matsuura2025universal,cheng2025sharp} settings. 
Nevertheless, because soft covering for fully quantum channels has been introduced only recently \cite{atif2024quantum}, the characterization of its error exponent remains open, even in the case of trace-distance covering. In this work, we propose an achievable error exponent for fully quantum soft covering under both trace-distance (Theorem \ref{thm-cover-errexp-l1}) and relative-entropy criteria (Corollary \ref{cor-covering-errexp-D}). The approach again leverages the connection between decoupling and soft covering, as mentioned before: a good covering is achieved when the measurement outcome on the reference system is approximately decoupled from the target state. Hence, our error exponent is obtained by applying recent results on the error exponents of quantum decoupling (e.g., \cite{dupuis2023privacy}\cite{li2024reliability}, in particular \cite{cheng2024joint}) to a measurement channel. Our error exponent also leads to tighter second-order rates (Theorem \ref{thm-cover-2nd-l1} and Theorem \ref{thm-cover-2nd-D}) than those obtained directly from the smoothed entropy approach in \cite{atif2024quantum}. 
Interestingly, when switching to relative-entropy criteria for the soft covering, the corresponding trace-distance exponents remain asymptotically achievable, thanks to the reverse Pinsker inequality. While the reverse Pinsker inequality performs poorly in the one-shot setting, it serves as a powerful tool in the first- and second-order asymptotic regimes.

This paper is organized as follows. Section \ref{sec-preli} introduces useful definitions and lemmas. The formulations and the main results are presented in Section \ref{sec-results}. Section \ref{sec-decouple} proves our conclusions for quantum decoupling and Section \ref{sec-cover} proves those for quantum soft covering.

\section{Preliminaries}\label{sec-preli}
We use the following notations. Let $\mH_A$ denote the Hilbert space of system $A$. Let $I_A$ denote the identity operator on $\mH_A$. Let $\id_A$ denote the identity channel. 
Let $\pi_A := \frac{I_A}{|\mH_A|}$ denote the maximally mixed state on system $A$. Moreover,
let $\mL(\mH_A)$ denote the set of linear operators.  
Let $\mP(\mH_A)$ denote the set of positive operators. We define $\mD(\mH_A) := \{\rho \in \mP(\mH_A): \tr[\rho] = 1\}$ as the set of density operators (quantum states), and $\mD_\leq(\mH_A) := \{\rho \in \mP(\mH_A): \tr[\rho] \leq 1\}$ as the set of subnormalized states. 
For a Hilbert space $\mH_A$, $|\mH_A| := \dim(\mH_A)$ denotes its dimension. For an operator $\rho\in\mL(\mH_A)$, $|\rho|$ denotes its rank and $\supp(\rho)$ denotes its support. 
Completely positive trace-preserving maps are called CPTP maps. In addition, we define $[x]^+ := \max\{x,0\}$, and write $\log:=\log_2$ and $\ln:=\log_e$. The notation $\mathbb{E}_U$ denotes the expectation with respect to the Haar measure over the full unitary group. Throughout this paper, the function $\ell(\cdot)$ is defined by $\ell(\epsilon) := -\epsilon \log\epsilon$.

\subsection{Useful definitions}

For the reader’s convenience, we recall the definitions of a collection of well-known information-theoretic quantities.

\begin{definition}[Relative entropy]\label{def-D}
    Given $\sigma,\rho \in \mD_\leq(\mH)$ such that $\supp(\sigma) \subset \supp(\rho)$, their relative entropy is defined as 
    $$ D(\sigma\|\rho) := \tr\left[ \sigma\left(\log\sigma - \log\rho\right) \right]. $$
\end{definition}

Throughout this paper, we follow Definition \ref{def-D} of relative entropy even for unnormalized states $\sigma$ and $\rho$, i.e., we do not include the factor $\tr[\sigma]$, as is done in many other works, in order to simplify notation. The reason for this is that the relative entropy for unnormalized states appears only as an intermediate quantity in the continuity argument (see Lemma \ref{lem-contin-rela}). All relative-entropy expressions in our main results involve normalized states.

\begin{definition}[Coherent information]\label{def-Ic}
    Given an input state $\rho_A\in\mD(\mH_A)$ and a quantum channel $\mN\colon \mL(\mH_A)\to\mL(\mH_B)$, 
    the coherent information is defined as
    $$ I_c(\rho_A,\mN) 
    = I(R\rangle B)_{\rho_{BR}} 
    := D(\rho_{BR} \| \rho_B \otimes I_R) 
    = H(B) - H(BR), $$
    where $\rho_{BR} := (\mN\otimes\id_R) (\Psi_{AR})$, and $\Psi_{AR}$ is a purification of $\rho_A$.
\end{definition}

\begin{definition}[Quantum information variance]\label{def-vari}
    Given $\sigma\in\mD(\mH)$ and $\rho\in\mP(\mH)$ such that $\supp(\sigma) \subset \supp(\rho) $, their quantum information variance is defined as
    $$ V(\sigma\|\rho) = \tr\left[\sigma(\log\sigma - \log\rho)^2 \right] - D^2(\sigma\|\rho). $$
\end{definition}

\begin{definition}\label{def-V}
    Given an input state $\rho_A\in\mD(\mH_A)$ and a quantum channel $\mN\colon \mL(\mH_A)\to\mL(\mH_B)$,
    we write
    $$ V(\rho_A,\mN) 
    := V(\rho_{BR}\|\rho_B\otimes I_R), $$
    where $V(\cdot\|\cdot)$ is the information variance in Definition \ref{def-vari}, $\rho_{BR} := (\mN\otimes\id_R) (\Psi_{AR})$, and $\Psi_{AR}$ is a purification of $\rho_A$.
\end{definition}

\begin{definition}[Fidelity and Purified Distance]
    Given $\tau,\rho \in \mD_\leq(\mH)$, their fidelity is defined as
    $$ F_*(\tau\|\rho) = \left[ \left\| \sqrt{\tau}\sqrt{\rho} \right\|_1 + \sqrt{\left(1-\tr[\tau]\right) \left(1-\tr[\rho]\right)} \right]^2, $$
    and their purified distance is defined as
    $P(\tau,\rho) = \sqrt{1-F_*(\tau\|\rho)}$.
    Moreover, given $\rho\in \mD_\leq(\mH)$, the $\epsilon$-ball at $\rho$ is defined as $ \mathscr{B}_\leq^\epsilon(\rho) 
        = \{\tau\in\mD_\leq(\mH): P(\tau,\rho) \leq \epsilon \}. $
\end{definition}

\begin{definition}[Quantum R\'enyi divergence] \label{def-Da}
    Let $\alpha \in (0,1)\cup(1,\infty)$. Given $\sigma,\rho \in \mD_\leq(\mH)$ such that $\supp(\sigma) \subset \supp(\rho)$, their Petz–R\'enyi divergence $D_\alpha$ and sandwiched R\'enyi divergence $\tlD_\alpha$ are defined as, respectively,
    $$ D_\alpha (\sigma\|\rho) 
        = \frac{1}{\alpha-1} \log \frac{ \tr\left[ \sigma^\alpha \rho^{1-\alpha} \right] } {\tr[\sigma]}, \quad
    \tlD_\alpha (\sigma\|\rho) 
        = \frac{1}{\alpha-1} \log \frac{\tlQ_\alpha (\sigma\|\rho)}{\tr[\sigma]}, $$
    where
    $ \tlQ_\alpha (\sigma\|\rho) 
    = \tr \left[\left( \rho^{\frac{1-\alpha}{2\alpha}} \sigma \rho^{\frac{1-\alpha}{2\alpha}} \right)^\alpha \right] $.
\end{definition}

\begin{definition}[Max-R\'enyi divergence]\label{def-Dmax}
    Given $\sigma,\rho \in \mD(\mH)$ such that $\supp(\sigma) \subset \supp(\rho)$, the max-R\'enyi divergence is defined as
    $ D_{\max}(\sigma\|\rho)
    := \inf \left\{\lambda\in\mathbb{R}: \sigma \leq 2^\lambda \rho\right\}$. Note that $D_{\max}(\sigma\|\rho) = \tlD_\infty (\sigma\|\rho)$ \cite[Def.~4.8]{tomamichel-book}.
\end{definition}

\begin{definition}[Smooth max-R\'enyi divergence\text{\cite[Eq.~(6.80)]{tomamichel-book}} ] \label{def-Dmax^ep}
    Given $\sigma\in\mD_\leq(\mH), \rho\in\mD(\mH)$ and $\epsilon\in[0,\sqrt{\tr[\sigma]}]$, the $\epsilon$-smooth max-R\'enyi divergence is defined as
    $$ D_{\max}^\epsilon(\sigma\|\rho) = \min_{\hsig\in\mathscr{B}_\leq^\epsilon(\sigma)} D_{\max}(\hsig\|\rho). $$
\end{definition}

\begin{lemma}[AEP of the smooth divergence\text{\cite[Eq.~(6)]{tomamichel2013hierarchy}}]\label{lem-Dmax-AEP}
    $$ D_{\max}^\epsilon(\sigma^\tenn \| \rho^\tenn) 
    = n D(\sigma\|\rho) - \sqrt{n V(\sigma\|\rho)} \ \Phi^{-1}(\epsilon^2) + O \left( \log n \right), $$
    where $ \Phi(x) = \dfrac{1}{\sqrt{2\pi}} \displaystyle{\int_{-\infty}^x} e^{-t^2/2} dt $.
\end{lemma}

\begin{definition}[$\alpha$-conditional entropy\text{\cite[Def.~5.2]{tomamichel-book}}] \label{def-Ha}
    Given $\rho_{AB} \in \mD_\leq(\mH_A \otimes \mH_B)$, we define the following $\alpha$-R\'enyi conditional entropies of $A$ conditioned on $B$:
    \begin{align*}
        \tlH_\alpha^\downarrow(A|B)_{\rho_{AB}} 
            = - \tlD_\alpha(\rho_{AB}\|I_A \otimes \rho_B), \quad
        \tlH_\alpha^\uparrow(A|B)_{\rho_{AB}} 
            = - \inf_{\sigma_B \in \mD(\mH_B)} \tlD_\alpha(\rho_{AB}\|I_A \otimes \sigma_B).
    \end{align*}
\end{definition}

\begin{definition}[Min-entropy\text{\cite[Def.~6.2]{tomamichel-book}}] \label{def-Hmin}
    Given $\rho_{AB} \in \mD_\leq(\mH_A \otimes \mH_B)$, the min-entropy of $A$ conditioned on $B$ of $\rho_{AB}$ is defined as
    $$ H_{\min}(A|B)_{\rho_{AB}} : = - \inf_{\sigma_B \in \mD(\mH_B)} D_{\max}(\rho_{AB}\|I_A \otimes \sigma_B). $$
    Note that $H_{\min}(A|B)_{\rho_{AB}} = \tlH_\infty^\uparrow(A|B)_{\rho_{AB}}$ \cite[Def.~6.2]{tomamichel-book}.
\end{definition}

\begin{definition}[Smooth min-entropy\text{\cite[Def.~6.9]{tomamichel-book}}] \label{def-Hmin^ep}
    Given $\rho_{AB} \in \mD_\leq(\mH_A \otimes \mH_B)$ and $\epsilon>0$, the $\epsilon$-smooth min-entropy of $A$ conditioned on $B$ of $\rho_{AB}$ is defined as
    $$ H_{\min}^\epsilon(A|B)_{\rho_{AB}} = \max_{\hrho_{AB} \in \mathscr{B}_\leq^\epsilon(\rho_{AB})} H_{\min}(A|B)_{\hrho_{AB}}. $$
\end{definition}

\subsection{Useful lemmas}

This subsection presents useful lemmas that will be used to prove our main results.

\begin{lemma}[Quantum Pinsker inequality\text{\cite[Thm.~11.9.1]{wildebook}}] \label{pinsker}
    Given $\sigma\in\mD(\mH)$ and $\rho\in\mD_\leq(\mH)$ such that $\supp(\sigma)\subset\supp(\rho)$, we have
    $$ D(\sigma\|\rho) \geq \frac{1}{2\ln2} \left\|\sigma - \rho \right\|_1^2. $$
\end{lemma}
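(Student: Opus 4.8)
The statement is the quantum Pinsker inequality, and the plan is to reduce it to the elementary scalar (two–point) classical Pinsker inequality by means of two reductions: first eliminating the sub-normalization of $\rho$, and then invoking monotonicity of relative entropy under a well-chosen two-outcome measurement channel.

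\emph{Step 1: reduce to a normalized $\rho$.} Let $s=\tra[\rho]\le 1$ and embed $\mathcal{H}$ into $\mathcal{H}'=\mathcal{H}\oplus\mathbb{C}$ by setting $\hat\rho=\rho\oplus(1-s)$ and $\hat\sigma=\sigma\oplus 0$, so that $\hat\sigma,\hat\rho\in\mathcal{D}(\mathcal{H}')$ are genuine states with $\supp(\hat\sigma)\subseteq\supp(\hat\rho)$. Since $\hat\sigma$ is block-diagonal with vanishing component on the new coordinate, a direct computation using the convention $0\log 0=0$ gives $D(\hat\sigma\|\hat\rho)=D(\sigma\|\rho)$, while $\|\hat\sigma-\hat\rho\|_1=\|\sigma-\rho\|_1+(1-s)\ge\|\sigma-\rho\|_1$. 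Hence it suffices to prove the bound for $\hat\sigma,\hat\rho$, and we may assume from now on that $\rho\in\mathcal{D}(\mathcal{H})$.

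\emph{Step 2: reduce to two classical outcomes.} Let $\Pi$ be the projector onto the span of the eigenvectors of the Hermitian operator $\sigma-\rho$ with non-negative eigenvalue, and let $\mathcal{M}(X)=\tra[\Pi X]\,\ket{0}\bra{0}+\tra[(I-\Pi)X]\,\ket{1}\bra{1}$ be the associated two-outcome measurement channel, which is CPTP. By the data-processing inequality for relative entropy, $D(\sigma\|\rho)\ge D(\mathcal{M}(\sigma)\|\mathcal{M}(\rho))=D(P\|Q)$, where $P=(\tra[\Pi\sigma],\,1-\tra[\Pi\sigma])$ and $Q=(\tra[\Pi\rho],\,1-\tra[\Pi\rho])$ are PMFs on two points. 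Because $\Pi$ is exactly the positive part of $\sigma-\rho$ and $\tra[\sigma-\rho]=0$, this choice attains the trace norm: $\|\sigma-\rho\|_1=2\tra[\Pi(\sigma-\rho)]=|P(0)-Q(0)|+|P(1)-Q(1)|=\|P-Q\|_1$. Thus the quantum inequality follows once the two-point classical Pinsker inequality $D(P\|Q)\ge\frac{1}{2\ln 2}\|P-Q\|_1^2$ is established.

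\emph{Step 3: the scalar estimate.} Writing $p=P(0)$, $q=Q(0)$ and noting $p\ge q$ by construction, the required inequality is $p\ln\frac{p}{q}+(1-p)\ln\frac{1-p}{1-q}\ge 2(p-q)^2$ in nats. I would prove this by elementary calculus: fix $p$ and set $g(q):=p\ln\frac{p}{q}+(1-p)\ln\frac{1-p}{1-q}-2(p-q)^2$ on $(0,p]$; then $g(p)=0$ and $g'(q)=\frac{q-p}{q(1-q)}+4(p-q)=(p-q)\bigl(4-\tfrac{1}{q(1-q)}\bigr)\le 0$ since $q(1-q)\le\tfrac14$, so $g$ is non-increasing and $g(q)\ge 0$ on $(0,p]$; the boundary cases $q=0$ or $p\in\{0,1\}$ are immediate. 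Chaining Steps 1--3 yields $D(\sigma\|\rho)\ge\frac{1}{2\ln 2}\|\sigma-\rho\|_1^2$. I do not expect a genuine obstacle: the only points requiring care are the convention for relative entropy of sub-normalized operators in the padding step and the correct invocation of data processing for $\mathcal{M}$; the analytic heart is the one-line monotonicity argument for $g$ in Step 3.
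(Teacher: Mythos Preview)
The paper does not supply its own proof of this lemma; it merely cites \cite[Theorem 11.9.1]{wildebook} as the source. Your argument is correct, and indeed Steps~2--3 are essentially the standard proof found there: pass to a two-outcome classical distribution via the Helstrom measurement (data processing preserves the trace distance while only decreasing the relative entropy), then verify the binary Pinsker bound by a one-variable calculus argument. Your Step~1 is a clean additional observation, needed because the lemma as stated here allows $\rho\in\mathcal{D}_\leq(\mathcal{H})$ whereas the textbook version assumes both arguments are normalized; the padding $\hat\rho=\rho\oplus(1-s)$, $\hat\sigma=\sigma\oplus 0$ preserves $D$ exactly and can only increase $\|\cdot\|_1$, so the reduction is immediate. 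One small remark: in Step~1 you should also note that if $s<1$ then $1-s>0$, so $\supp(\hat\sigma)\subseteq\supp(\hat\rho)$ still holds and $D(\hat\sigma\|\hat\rho)$ is finite; if $s=1$ there is nothing to do.
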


We present the following continuity bound for quantum relative entropy. It is analogous to the Fannes-Audenaert inequality \cite[Thms.~11.10.1 and 11.10.2]{wildebook}, which provides a continuity bound for the von Neumann entropy. This is an extension to subnormalized random states.

\begin{lemma}[Continuity of quantum relative entropy]\label{lem-contin-rela}
    Given $\rho, \hat{\rho} \in \mD_\leq(\mH)$ such that $\left\|\rho - \hat{\rho}\right\|_1 \leq \epsilon_\rho \leq \frac1e $ and two random operators $\sigma, \hsig \in \mD_\leq(\mH)$ such that
    $ \mathbb{E} \sigma = \rho$,  
    $\mathbb{E} \hsig = \hat{\rho}$, and 
    $\mathbb{E}\left\|\sigma - \hsig\right\|_1 \leq \epsilon_\sigma \leq \frac1e$,
    we have
    $$ \big| \mathbb{E}D(\sigma\|\rho) - \mathbb{E}D(\hsig\|\hat{\rho}) \big| \leq (\epsilon_\sigma + \epsilon_\rho) \log|\mH| + \ell(\epsilon_\sigma) + \ell(\epsilon_\rho). $$
\end{lemma}

% \begin{remark}
%     Lemma \ref{lem-contin-rela} can be understood as a triangle-inequality-like bound of relative entropy.
% \end{remark}

Lemma \ref{lem-contin-rela} is proved in Appendix \ref{app-proof-lem-contin}. In many quantum information problems based on the trace-distance criterion, the following quadratic bound arises \cite[Lem.~5.1.3]{renner2008security}\cite[Lem.~3.6]{dupuis2014one}:
    $$ \left\|\sigma-\rho\right\|_1 \leq \sqrt{\tr[\xi] \ \tr\left[(\sigma-\rho) \xi^{-\frac12} (\sigma-\rho) \xi^{-\frac12}\right]}, \quad \forall \ \xi\in\mD(\mH). $$
In this work, which focuses on the relative entropy criterion, we establish a similar quadratic bound for the relative entropy in Lemma \ref{lemma-D<Q2}.

\begin{lemma}[Upper bound of relative entropy]\label{lemma-D<Q2}
    Given $\sigma, \rho \in \mD_\leq(\mH)$ such that $\supp(\sigma)\subset\supp(\rho)$, we have
    $$ \frac{1}{\log e} D(\sigma\|\rho) \leq \tr\left[\sigma \rho^{-\frac12} \sigma \rho^{-\frac12} \right] - \tr[\sigma]. $$
\end{lemma}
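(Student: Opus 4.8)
\emph{Proof sketch (plan).} The idea is to recognize the right-hand side, up to the normalization $\tra[\sigma]$, as the order-$2$ sandwiched R\'enyi quantity $\tlQ_2(\sigma\|\rho)$ of Definition \ref{def-D_alpha}, and then to convert the known ordering $\tlD_2\ge D$ into a quadratic bound by means of the elementary scalar inequality $e^y\ge 1+y$.

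First I would rewrite the right-hand side: by cyclicity of the trace, and with the inverse powers understood on $\supp(\rho)\supseteq\supp(\sigma)$,
$$ \tra\!\left[\sigma\rho^{-\frac12}\sigma\rho^{-\frac12}\right] = \tra\!\left[\left(\rho^{-\frac14}\sigma\rho^{-\frac14}\right)^{2}\right] = \tlQ_2(\sigma\|\rho), $$
so the claim becomes $\frac{1}{\log e}D(\sigma\|\rho)\le \tlQ_2(\sigma\|\rho)-\tra[\sigma]$. Next I would reduce to the normalized case: writing $s:=\tra[\sigma]$ and $\bar\sigma:=\sigma/s$, the scaling identities $D(\sigma\|\rho)=s\log s+s\,D(\bar\sigma\|\rho)$ and $\tlQ_2(\sigma\|\rho)=s^{2}\,\tlQ_2(\bar\sigma\|\rho)$ (both immediate from the definitions) reduce the inequality, after dividing by $s>0$, to
$$ \ln s+\frac{1}{\log e}D(\bar\sigma\|\rho)\ \le\ s\,\tlQ_2(\bar\sigma\|\rho)-1 . $$

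The substantive input is then the comparison $\tlD_2(\bar\sigma\|\rho)\ge D(\bar\sigma\|\rho)$, which follows from monotonicity of the sandwiched R\'enyi divergence in its order together with $\lim_{\alpha\to1}\tlD_\alpha=D$ \cite{toma,khatribook}; since $\bar\sigma$ is a state this reads $\tlQ_2(\bar\sigma\|\rho)\ge 2^{D(\bar\sigma\|\rho)}=\exp\!\big(\tfrac{1}{\log e}D(\bar\sigma\|\rho)\big)$. Multiplying by $s=e^{\ln s}$ and putting $y:=\ln s+\tfrac{1}{\log e}D(\bar\sigma\|\rho)$ gives $s\,\tlQ_2(\bar\sigma\|\rho)\ge e^{y}\ge 1+y$, which is exactly the displayed bound; multiplying back by $s$ completes the argument.

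The step I expect to carry the weight is the divergence comparison $\tlD_2\ge D$ (equivalently $\tlQ_2(\bar\sigma\|\rho)\ge 2^{D(\bar\sigma\|\rho)}$); the rest is bookkeeping. One point to be careful about is that relative entropy scales like $s\log s+s(\cdot)$ while $\tlQ_2$ scales like $s^{2}(\cdot)$, so the normalization cannot simply be cancelled — it is essential to keep the exponential form $\tlQ_2(\bar\sigma\|\rho)\ge e^{D(\bar\sigma\|\rho)/\log e}$ rather than prematurely linearizing it, so that the additive term $\ln s$ can be absorbed by the same convexity bound $e^y\ge1+y$; finally, the hypothesis $\supp(\sigma)\subset\supp(\rho)$ is what keeps both sides finite, and degenerate cases are handled, as usual, by restricting to $\supp(\rho)$.
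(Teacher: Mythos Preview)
Your argument is correct and coincides with the paper's first proof: both use the ordering $\tlD_1\le\tlD_2$ together with $\ln x\le x-1$ (equivalently $e^y\ge 1+y$); the only cosmetic difference is that the paper works directly with the unnormalized definitions $\tlD_1(\sigma\|\rho)=D(\sigma\|\rho)/\tra[\sigma]$ and $\tlD_2(\sigma\|\rho)=\log\big(\tlQ_2(\sigma\|\rho)/\tra[\sigma]\big)$, which absorbs your explicit $\ln s$ bookkeeping. The paper also records a second, independent proof via the integral representation of the operator logarithm (Lemmas~\ref{lemma-log<int} and~\ref{lemma-int<D2}).
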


\begin{proof}
% Note that we have the following R\'enyi divergences.
% $$ \tlD_1(\sigma\|\rho) = \frac{D(\sigma\|\rho)}{\tr[\sigma]}, \quad
%     \tlD_2(\sigma\|\rho) = \log \frac{\tr[\sigma \rho^{-\frac12} \sigma \rho^{-\frac12}]}{\tr[\sigma]}. $$
% Then
% $$ \frac{1}{\log e} \frac{D(\sigma\|\rho)}{\tr[\sigma]} 
% \overset{a}{\leq} 
%     \frac{1}{\log e} \log \frac{\tr[\sigma \rho^{-\frac12} \sigma \rho^{-\frac12}]}{\tr[\sigma]}
% \overset{b}{\leq} 
%     \frac{\tr[\sigma \rho^{-\frac12} \sigma \rho^{-\frac12}]}{\tr[\sigma]} - 1, $$
This follows from directly $\tlD_1(\sigma\|\rho) \leq \tlD_2(\sigma\|\rho)$ \cite[Cor.~4.13]{tomamichel-book} and $\ln(x)\leq x-1$. 
\end{proof}

Moreover, the duality relations of R\'enyi quantities are also well-studied in the literature; see, e.g., \cite[Sec.~5.3]{tomamichel-book}. Here we introduce a duality bound between $\tlQ_2$ and $D_{\max}$ which applies to unnormalized states.

\begin{lemma}[Duality bound for unnormalized states]\label{lem-duality}
    Given $\rho_{AB} \in \mP(\mH_A \otimes \mH_B)$ and $\tau_A \in \mP(\mH_A)$ such that $\supp(\rho_A)\subset\supp(\tau_A)$, we have
    $$ \log \tlQ_2(\rho_{AB} \| \tau_A \otimes \rho_B) \leq \inf_{\xi_B \in \mD(\mH_B)} D_{\max}(\rho_{AB}\|\tau_A \otimes \xi_B) $$
\end{lemma}

\begin{remark}
    Taking $\tau_A = I_A$, we get that $-\tilde{H}_2^\downarrow(A|B) \leq -\tilde{H}_\infty^\uparrow(A|B)$, which is \cite[Eq.~(48)]{tomamichel2014relating}.
\end{remark}

Lemma \ref{lem-duality} is proved in Appendix \ref{app-proof-duality}. Next, we propose a reverse Pinsker inequality that bounds relative entropy by the trace distance. 

\begin{lemma}[A reverse Pinsker inequality]\label{lem-revPinsker}
    Given $\sigma,\rho \in \mD(\mH)$ such that $\supp(\sigma) \subset \supp(\rho) $ and $\left\|\sigma-\rho\right\|_1 \leq \epsilon \leq \frac{1}{e}$, we have 
    $$ D(\sigma\|\rho) \leq \epsilon \log \frac{|\mH|}{\eigrho} + \ell(\epsilon), $$
    where $\eigrho$ is the smallest eigenvalue of $\rho$.
\end{lemma}

Lemma \ref{lem-revPinsker} is proved in Appendix \ref{app-proof-revPinsker}. A natural corollary from Lemma \ref{lem-revPinsker} is that in the $n$-shot setting, if $\rho$ is a product state, then $\log \frac{|\mH|}{\eigrho}$ contributes only linearly in $n$. Hence, if $\epsilon$ is exponentially small, then so is the relative entropy. We formalize this observation in the following Lemma \ref{lem-errexp-rela}, which states that the error exponent of trace distance implies an identical exponent of the relative entropy. 

\begin{lemma}[Error exponent of trace distance implies that of relative entropy]\label{lem-errexp-rela}
    Given $\rho\in\mD(\mH)$ and a sequence $\sigma_n\in \mD(\mH^\tenn)$ such that $\supp(\sigma_n) \subset \supp(\rho^\tenn) $ and $\left\|\sigma_n - \rho^\tenn\right\|_1 \leq 2^{-nE}$ for some constant $E>0$, then we have 
    $ \limsup_{n\to\infty} -\frac1n \log D\left(\sigma_n\|\rho^\tenn\right) \geq E. $
\end{lemma}

\begin{proof}
    Let $\epsilon = 2^{-nE}$. Then $\ell(\epsilon) = - \epsilon \log \epsilon = nE \cdot 2^{-nE}$. Also, note that $\eig(\rho^\tenn)_{\min} = (\eigrho)^n$. Then our conclusion is immediately implied by Lemma \ref{lem-revPinsker}.
\end{proof}

It is noteworthy that Lemma \ref{lem-errexp-rela} is an asymptotic result. In the one-shot scenario, this reverse Pinsker inequality can still yield a loose error bound (see Section \ref{sec-main-compare} for further discussion), which motivates our derivation of a significantly more efficient one-shot result under the relative entropy criterion.

\section{Main Results}\label{sec-results}

In this section, we present the main results of this work. Section \ref{sec-main-decouple} summarizes our one-shot achievability for quantum decoupling with relative entropy criterion. Section \ref{sec-main-cover} formulates quantum soft covering and presents a series of results, including one-shot bounds, asymptotic bounds, error exponents, and second-order rates. In Section \ref{sec-main-compare}, we compare our one-shot soft covering bound with the bound obtained via the proposed reverse Pinsker inequality, and show that the former is tighter and superior.

\subsection{Quantum Decoupling}\label{sec-main-decouple}

Following \cite{dupuis2014one}, we formulate the quantum decoupling problem as follows. 
Consider a bipartite state $\rho_{AE} \in \mD(\mH_A \otimes \mH_E)$ and a quantum channel (CPTP map) $\mT \colon \mL(\mH_A)\to\mL(\mH_B)$. Our objective is to decouple the output of the channel from system $E$ by unitarily modifying $n$-copies of system $A$ before feeding it to the channel. 

Let $\mT^\tenn \circ U_{A^n}$ act on system $A$, where $U_{A^n}$ is a unitary operator acting on $\mH_A^\tenn$. At the output of $\mT^\tenn$, we obtain a state
    \begin{equation}\label{sig_BE}
        \sigma_{B^nE^n} = (\mT^\tenn \otimes \id_{E^n}) 
            \left( (U_{A^n} \otimes I_{E^n})
            \rho_{AE}^\tenn
            (U_{A^n}^\dagger \otimes I_{E^n}) \right).
    \end{equation}
Let $|\mH_A| = d$. Define the Choi map of $\mT$ as 
\begin{equation}\label{tau_AB}
    \tau_{AB} := \frac{1}{d} (\id_A \otimes \mT) \ket{\Gamma}_{AA'} \bra{\Gamma}.
\end{equation}
Here 
$ \ket{\Gamma}_{AA'} = \sum_i \ket{i}_A\ket{i}_{A'}$ is the unnormalized maximally entangled state, where $\{\ket{i}\}$ denoting a basis of $\mH_A$ and $A'$ is a copy of system $A$. One can check that $\tau_B = \mT(\pi_A)$. Also, define
\begin{equation}\label{rho_BE}
    \rho_{BE} := \tau_B \otimes \rho_E = \mT(\pi_A) \otimes \rho_E.
\end{equation}
The above construction is illustrated by Figure \ref{fig-decouple}. The quantum decoupling problem asks how close $\sigma_{B^nE^n}$ can be made to $\rho_{BE}^\tenn$.

\begin{figure}[h]
  \centering
  \includegraphics[width=0.4\textwidth]{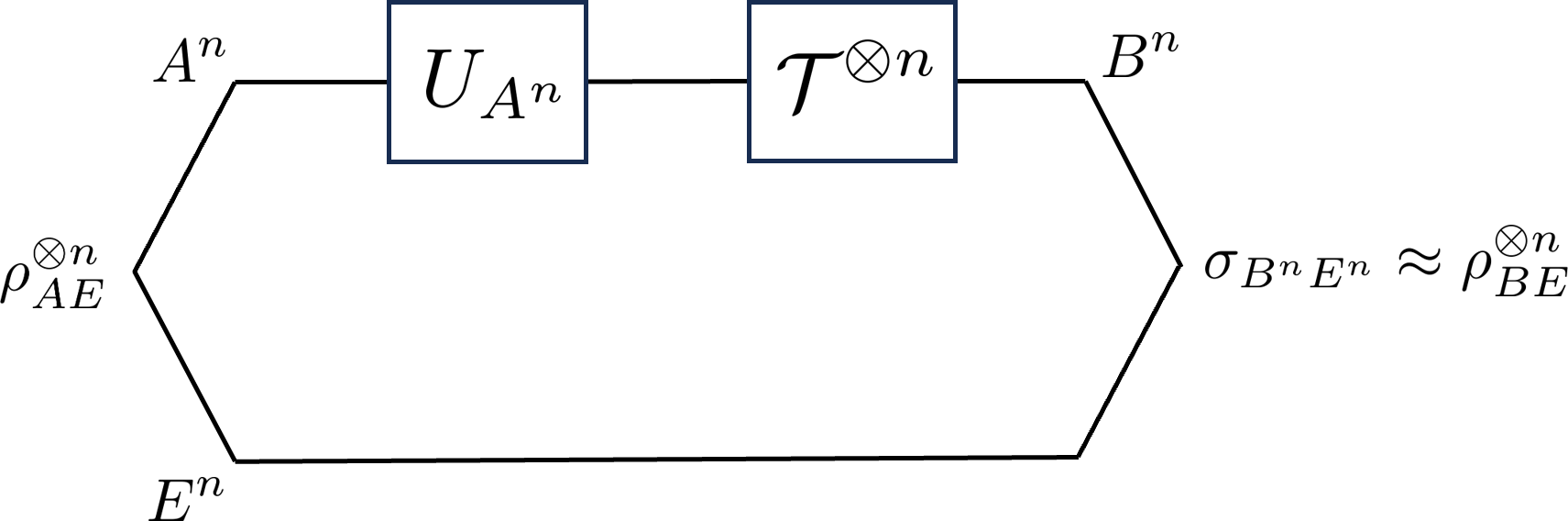}
  \caption{Quantum Decoupling}
  \label{fig-decouple}
\end{figure}

\begin{definition}[Quantum decoupling]
    A quantum decoupling setup is characterized by a pair $(\rho_{AE}, \mT)$ where $\rho_{AE} \in \mD(\mH_A\otimes\mH_E)$ is a state and $\mT: \mL(\mH_A)\to\mL(\mH_B)$ is a CPTP map. 
\end{definition}

\begin{definition}[Quantum decoupler]
    Given a pair $(\rho_{AE}, \mT)$, an $(n,\delta)$ quantum decoupler is a unitary operator $U_{A^n}$ acting on $\mH_A^\tenn$ such that $d( \sigma_{B^nE^n}, \rho_{BE}^\tenn) \leq \delta$, where $\sigma_{B^nE^n}$ and $\rho_{BE}$ are defined in \eqref{sig_BE} and \eqref{rho_BE}, respectively. 
    Here $d(\cdot,\cdot)$ is some given criterion. 
    If $d(\sigma,\rho) = \left\|\sigma-\rho\right\|_1$ is the trace distance, then the corresponding quantum decoupler is called an $(n,\delta)_1$ (trace-distance) quantum decoupler.
    If $d(\sigma,\rho) = D(\sigma\|\rho)$ is the relative entropy, then the corresponding quantum decoupler is called an $(n,\delta)_D$ (relative-entropy) quantum decoupler.
\end{definition}

A one-shot achievablity for $(1,\delta)_1$ trace-distance quantum decouplers is investigated in \cite[Thms.~3.1]{dupuis2014one}. In this work, we focus on the relative-entropy quantum decoupler. We state the following theorem on the one-shot achievability, which is proved in Section \ref{sec-decouple}. 

\begin{theorem}[one-shot quantum decoupling]\label{thm-decouple-1shot}
    Given a pair $(\rho_{AE},\mT)$, for any $\epsilon\in(0,\frac1{8e})$, there exists a $(1,\delta)_D$ quantum decoupler such that 
    $$ \delta \leq (\log e) \exp_2 
        \left\{ -H_{\min}^\epsilon(A|B)_{\tau_{AB}} -H_{\min}^\epsilon(A|E)_{\rho_{AE}}\right\} 
        + 12\epsilon \log|\mH_B| |\mH_E| + \ell(8\epsilon) + \ell(4\epsilon). $$
    where $H_{\min}^\epsilon$ is the smooth min-entropy in Definition \ref{def-Hmin^ep}.
\end{theorem}

\subsection{Quantum Soft Covering}\label{sec-main-cover}

% \begin{figure}[h]
%   \centering
%   \includegraphics[width=0.3\textwidth]{figs/fig_covering.png}
%   \caption{Quantum Soft Coverig}
%   \label{fig-cover}
% \end{figure}

This subsection presents our formulations and results of the quantum soft covering. In the general problem setting, we consider sending a rank-constrained (the rank can be understood as the code rate) encoded state through a quantum channel $\mN(\cdot)\colon \mL(\mH_A)\to\mL(\mH_B)$, by using it $n$ times, in order to approximate a given product output state $\rho_B^\tenn$. We begin by formulating the quantum soft covering problem as follows.

\begin{definition}[Quantum soft covering]
    A quantum covering setup is characterized by a pair $(\rho_B, \mN)$ where $\rho_B \in \mD(\mH_B)$ is a state and $\mN: \mL(\mH_A)\to\mL(\mH_B)$ is a CPTP map. Define 
    $ \mS(\rho_B,\mN) := 
    \{\rho_A \in \mD(\mH_A): \mN(\rho_A)=\rho_B\} $
    to be the set of all input states that yield that target output $\rho_B$ under channel $\mN$.
    For an input state $\rho_A\in\mS(\rho_B,\mN)$, we write $\rho_{BR} := (\mN\otimes\id_R) (\Psi_{AR})$, where $\Psi_{AR}$ is a purification of $\rho_A$.
\end{definition}

\begin{definition}[Quantum soft covering code]
    Given a pair $(\rho_B, \mN)$, an $(n,\Theta,\delta)$ quantum soft covering code is a state $\sigma_{A^n} \in \mD(\mH_A^\tenn)$ such that $|\sigma_{A^n}|=\Theta$ and $d \left( \mN(\sigma_{A^n} \right), \rho_B^\tenn) \leq \delta$. Here $d(\cdot,\cdot)$ is some given criterion. 
    If $d(\sigma,\rho) = \left\|\sigma-\rho\right\|_1$ is the trace distance, then the corresponding quantum covering code is called an $(n,\Theta,\delta)_1$ (trace-distance) quantum covering code. 
    If $d(\sigma,\rho) = D(\sigma\|\rho)$ is the relative entropy, then the corresponding quantum covering code is called an $(n,\Theta,\delta)_D$ (relative-entropy) quantum covering code. 
\end{definition}

The one-shot and asymptotic characterizations of trace-distance quantum covering are discussed in \cite[Thms.~3.3 and 3.5]{atif2024quantum}. Hence, in this work, we focus on these two characterizations of quantum soft covering with relative entropy criterion. We first present the following Theorem \ref{thm-cover-1shot} on the one-shot result. A proof is available in Section \ref{sec-cover-rela}.

\begin{theorem}[One-shot quantum covering]\label{thm-cover-1shot}
    Given a pair $(\rho_B, \mN)$, for any $\rho_A \in \mS(\rho_B,\mN)$,  
    for all $\epsilon\in(0,\frac{1}{2e})$ and $\eta\in(0,1)$, there exists an $(1,\Theta,\delta)_D$ quantum covering code such that 
    \begin{align*}
        \log\Theta &\leq 
            \left[ -H_{\min}^\epsilon(R|B)_{\rho_{BR}} - \log\eta \right]^+, \\
        \delta &\leq 
            2(\log e)\eta + 4\epsilon \log|\mH_A||\mH_B| + 2\ell(2\epsilon),
    \end{align*}
    where $H_{\min}^\epsilon(R|B)$ is the smooth min-entropy in Definition \ref{def-Hmin^ep}. Moreover, for all $\delta \in (0,1)$, every $(1,\Theta,\delta)_D$ quantum covering code satisfies 
    $$ \log \Theta \geq \inf_{\rho_A' \in \mS_{\delta}(\rho_B,\mN)} \left[ -H_{\min}(R|B)_{\rho'_{BR}} \right]^+, $$
    where $\mS_{\delta}(\rho_B,\mN) := \left\{ \rho_A' \in \mD(\mH_A) : D(\mN(\rho'_A) \| \rho_B) \leq \delta \right\}$, and $\rho'_{BR} := (\mN \otimes \id_R) \Psi'_{AR}$ with $\Psi'_{AR}$ being a purification of $\rho_A'$. 
\end{theorem}

Extending this one-shot result to the asymptotic scenario, we obtain Theorem \ref{thm-cover-asym} below, which characterizes the lowest achievable rate. This rate coincides with that of the trace-distance case \cite[Thm.~3.5]{atif2024quantum} and is determined by the minimal coherent information required to produce the target output. Unlike quantum channel coding, the limit of the asymptotically achievable rate in the quantum soft covering ultimately reduces to a single-letter expression.

\begin{definition}[Achievability of quantum covering]
    Given a pair $(\rho_B, \mN)$, a rate $r$ is said to be achievable for quantum covering if for all $\delta>0$ and all sufficiently large $n$, there exists an $(n,\Theta,\delta)_D$ quantum covering code such that $\frac{1}{n}\log\Theta \leq r + \delta$.
\end{definition}

\begin{theorem}[Asymptotic quantum covering]\label{thm-cover-asym}
    Given a pair $(\rho_B, \mN)$, we have
    $$ \inf\{r: r\text{ is achievable}\} 
    = \inf_{\rho_A \in \mS(\rho_B,\mN)} 
        \big[ I_c(\rho_A,\mN) \big]^+, $$
    where $I_c(\rho_A,\mN)$ is the coherent information, see Definition \ref{def-Ic}.
\end{theorem}

An outline of the proof of Theorem \ref{thm-cover-asym} is provided in Section \ref{sec-cover-asym}. Furthermore, we establish achievable error exponents for quantum soft covering with trace-distance and relative-entropy criteria. 

\begin{theorem}[One-shot error exponent of quantum covering with trace distance criterion] \label{thm-cover-errexp-l1}
    Given a pair $(\rho_B, \mN)$, for any $\rho_A \in \mS(\rho_B,\mN)$,
    there exists an $(1,2^r,\delta)_1$ quantum covering code such that 
    $$ -\log\frac{\delta}{4} \geq E(r) 
    := \sup_{\alpha\in[1,2]} \frac{\alpha-1}{\alpha} \left( r
    + \tilde{H}_\alpha^\uparrow(R|B)_{\rho_{BR}}\right). $$
    where $\tilde{H}_\alpha^\uparrow(R|B)_{\rho_{BR}}$ is the $\alpha$-conditional entropy, see Definition \ref{def-Ha}.
\end{theorem}

A proof is provided in Section \ref{sec-cover-errexp}. It is noteworthy that $\rho_A^\tenn\in\mS(\rho_B^\tenn,\mN^\tenn)$ if $\rho_A\in\mS(\rho_B,\mN)$, resulting in an additive $\tilde{H}_\alpha^\uparrow$, i.e., $\tilde{H}_\alpha^\uparrow(R|B)_{\rho_{BR}^\tenn} = n \tilde{H}_\alpha^\uparrow(R|B)_{\rho_{BR}}$ \cite[Cor.~5.9]{tomamichel-book}. Hence, Theorem \ref{thm-cover-errexp-l1} readily generalizes to the $n$-shot setting. Combining this with Lemma \ref{lem-errexp-rela}, one immediately obtains the following error exponent of quantum soft covering with relative entropy criterion. 

\begin{corollary}[Error exponent of quantum covering with relative entropy]\label{cor-covering-errexp-D}
    Given a pair $(\rho_B, \mN)$, for any $\rho_A \in \mS(\rho_B,\mN)$, 
    there exists an $(n,2^{nr},\delta_n)_D$ quantum covering code such that 
    $$-\frac1n \limsup_{n\to\infty} \log \delta_n \geq E(r), $$
    where $E(r)$ is defined in Theorem \ref{thm-cover-errexp-l1}.
\end{corollary}

Two examples of the proposed soft covering error exponent $E(r)$ are presented in the following Figure \ref{fig-covering-errexp}. The first example uses a depolarizing channel, and the second uses an erasure channel. In both examples, the input Hilbert space is two-dimensional (i.e., a qubit space), and the set $\mS(\rho_B,\mN)$ is a singleton $\rho_A = \pi_A$. 

\begin{figure}[!htbp]
    \centering
    \subfloat[Depolarizing channel]{%
        \includegraphics[width=0.49\linewidth]{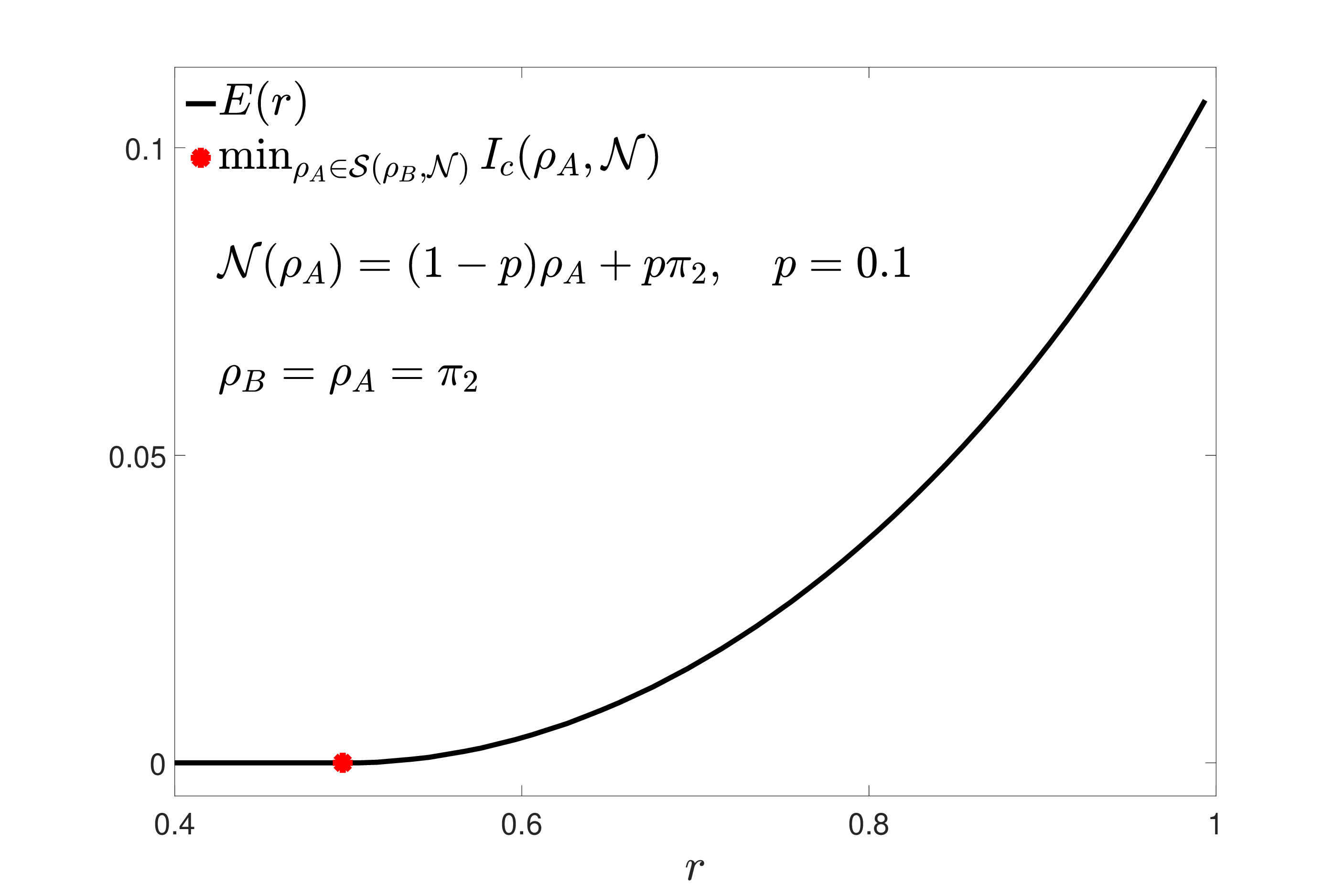}%
        \label{fig-covering-errexp1}
    }\hfill
    \subfloat[Erasure channel]{%
        \includegraphics[width=0.49\linewidth]{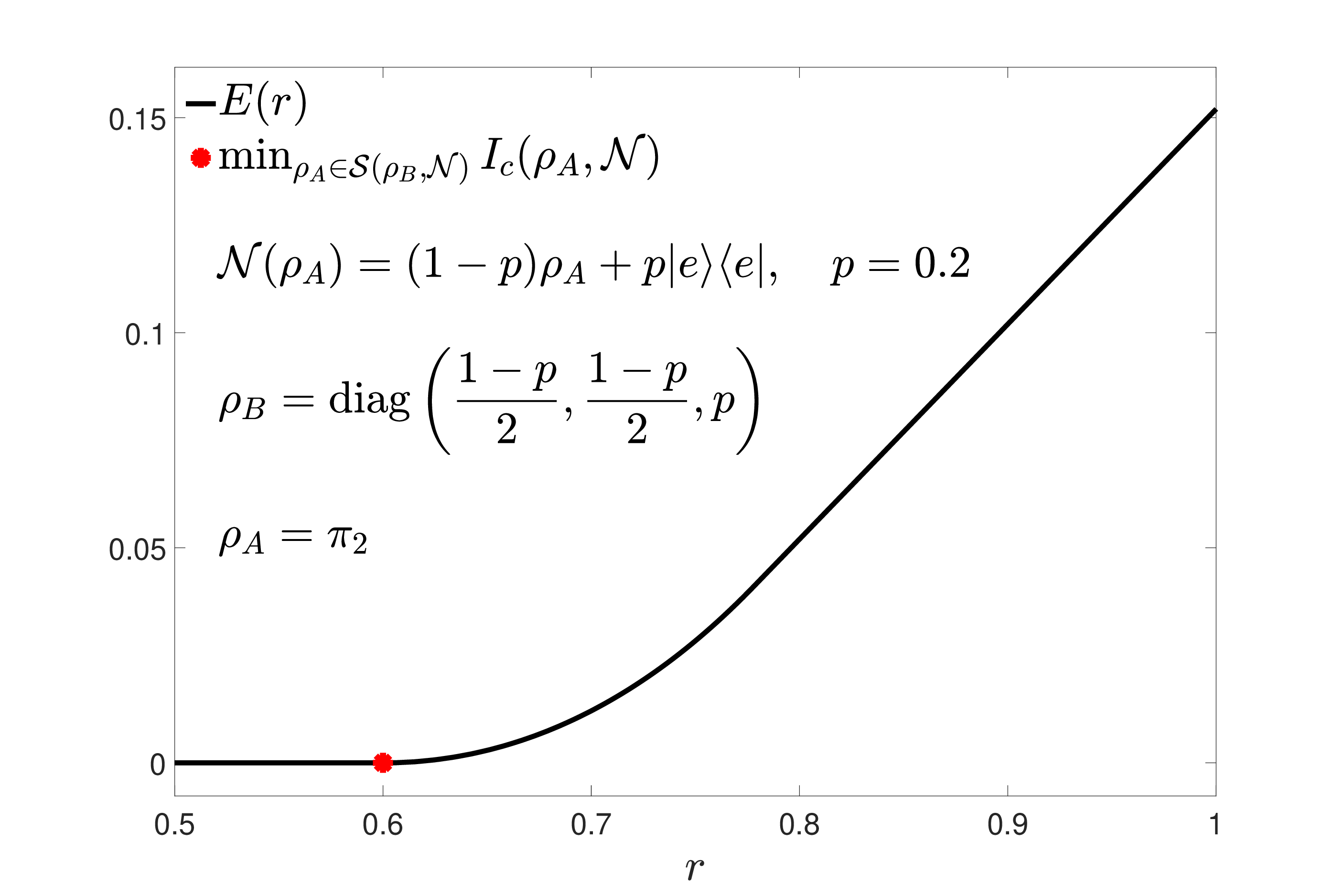}%
        \label{fig-covering-errexp2}
    }
    \caption{Examples of the quantum soft covering error exponent $E(r)$.}
    \label{fig-covering-errexp}
\end{figure}

From the error exponent, we derive second-order achievability bounds for both covering criteria, aimed at providing a more precise estimate of the convergence behavior of the code rate near $I_c(\rho_A,\mN)$ for a given covering error. These bounds are provided in the following Theorem \ref{thm-cover-2nd-l1} and Theorem \ref{thm-cover-2nd-D}.

\begin{theorem}[Second-order achievability with trace distance criterion] \label{thm-cover-2nd-l1}
    Given a pair $(\rho_B, \mN)$, for any $\rho_A\in\mS(\rho_B,\mN)$, there exists an $(n,\Theta,\delta)_1$ quantum covering code with $\delta\in(0,1)$ whose rate scales as
    \begin{equation}\label{2nd-l1}
        \frac{1}{n}\log\Theta 
        = \left[ I_c(\rho_A,\mN) + 
        \sqrt{\left(2V(\rho_A,\mN)\ln\frac{4}{\delta}\right) \frac{1}{n}} 
        + o\left(\frac{1}{\sqrt{n}}\right) \right]^+,
    \end{equation}
    where $I_c(\rho_A,\mN)$ is defined in Definition \ref{def-Ic} and $V(\rho_A,\mN)$ is defined in Definition \ref{def-V}.
\end{theorem}

\begin{theorem}[Second-order achievability with relative entropy criterion] \label{thm-cover-2nd-D}
    Given a pair $(\rho_B, \mN)$, for any $\rho_A\in\mS(\rho_B,\mN)$, there exists an $(n,\Theta,\delta)_D$ quantum covering code with $\delta\in(0,1)$ whose rate scales as
    \begin{equation}\label{2nd-D}
        \frac{1}{n}\log\Theta 
        = \left[ I_c(\rho_A,\mN) + \sqrt{2V(\rho_A,\mN) \ \frac{\ln (cn)}{n}} 
            + o\left(\frac{1}{\sqrt{n}}\right) \right]^+,
    \end{equation}
     where $I_c(\rho_A,\mN)$ is defined in Definition \ref{def-Ic}, $V(\rho_A,\mN)$ is defined in Definition \ref{def-V}, $c = \frac{8\log|\mH_B|}{\eigrho\delta}$ is a constant dependent on $\delta$, and $\eigrho$ is the smallest eigenvalue of $\rho$. 
\end{theorem}

Both Theorem \ref{thm-cover-2nd-l1} and Theorem \ref{thm-cover-2nd-D} are proved in Section \ref{sec-cover-2nd}. In particular, under the setting of Theorem \ref{thm-cover-2nd-l1} with trace distance covering, a similar second-order achievability is proposed by \cite[Thm.~3.7]{atif2024quantum}:
\begin{equation}\label{2nd-atif}
    \frac{1}{n}\log\Theta 
        = \left[ I_c(\rho_A,\mN) 
        + \left|\Phi^{-1}\left(\frac{\delta^2}{100}\right)\right|
        \sqrt{V(\rho_A,\mN) \ \frac{1}{n}} 
        + o\left(\frac{1}{\sqrt{n}}\right) \right]^+.
\end{equation}
We argue that the proposed second-order rate in \eqref{2nd-l1} is tighter than \eqref{2nd-atif}. Using \cite[Prop.~2.1.2]{vershynin-book}
\begin{equation}\label{Phi<ln}
    \left|\Phi^{-1}(\epsilon)\right| \lesssim \sqrt{-2\ln\epsilon},
\end{equation}
the coefficient for $\frac{1}{\sqrt{n}}$ in 
\eqref{2nd-atif} equivalently scales as 
    $$ \sqrt{V(\rho_A,\mN)} \left|\Phi^{-1}\left(\frac{\delta^2}{100}\right)\right|
    \lesssim \sqrt{ 4V(\rho_A,\mN) \ln \frac{10}{\delta} }, $$
which is bigger than the corresponding coefficient $\sqrt{ 2V(\rho_A,\mN) \ln \frac{4}{\delta} }$ in \eqref{2nd-l1}. Ergo, to achieve the same error, the proposed Theorem \ref{thm-cover-2nd-l1} implies the existence of a code with a lower rate. Similarly, Theorem \ref{thm-cover-2nd-D} also yields a tighter second-order rate than that obtained by directly applying the one-shot achievability result in Theorem \ref{thm-cover-1shot}. A detailed discussion is provided in Section \ref{sec-cover-2nd}.

\subsection{Comparison between our one-shot result and the reverse Pinsker result} \label{sec-main-compare}

Letting $\Theta = 2^r$, we rewrite the achievability part of our one-shot Theorem \ref{thm-cover-1shot} as
\begin{equation}\label{delta_de}
    D(\sigma_B\|\rho_B) \leq \delta_\de 
        := (\log e) 2^{-(r + H_{\min}^\epsilon(R|B))} + 4\epsilon \log|\mH_A||\mH_B| + 2\ell(2\epsilon),
\end{equation}
for some $\sigma_B = \mN(\sigma_A)$ with $|\sigma_A|\leq\Theta$. The subscript `$\de$' indicates that this bound is obtained by a decoupling method, as implied by the proof of Theorem \ref{thm-cover-1shot} in Section \ref{sec-cover-rela}. On the other hand, a trace-distance bound is proposed in \cite{atif2024quantum}: $\left\|\sigma_B -\rho_B\right\|_1 \leq 2^{-\frac12(R + H_{\min}^\epsilon(R|B))} + 4\epsilon$. Consequently, one can acquire a trivial bound for $D(\sigma_B\|\rho_B)$ through Lemma \ref{lem-revPinsker}:
\begin{equation}\label{delta_rp}
    D(\sigma_B\|\rho_B) \leq \delta_\rp
        := (2^{-\frac12(r + H_{\min}^\epsilon(R|B))} + 4\epsilon) \log \frac{|\mH_B|}{\eig(\rho_B)_{\min}} + \ell(2^{-\frac12(R + H_{\min}^\epsilon(R|B))} + 4\epsilon).
\end{equation}
Likewise, the subscript `$\rp$' indicates that this bound is obtained by a reverse Pinsker inequality. 

We argue that, compared with \eqref{delta_rp}, the proposed bound \eqref{delta_de} enjoys two advantages. First, when both bounds are extended to the finite-shot regime, the dominant term for $\delta_\rp$ is $2^{-\frac12(r + H_{\min}^\epsilon(R|B))}$, while that for $\delta_\de$ is $2^{-(r + H_{\min}^\epsilon(R|B))}$. The latter is significantly smaller due to the absence of the square root. Second, the smallest eigenvalue $\eig(\rho_B)_{\min}$ may take very small values, which can also lead to a large $\delta_\rp$. The following Figure \ref{fig-compare} compares $\delta_\de$ and $\delta_\rp$ for the depolarizing channel shown in Figure \ref{fig-covering-errexp}\subref{fig-covering-errexp1}. The target state $\rho_B$ is indicated in the plot, and its eigenvalues are $0.99$ and $0.01$.

\begin{figure}[h]
  \centering
  \includegraphics[width=0.5\textwidth]{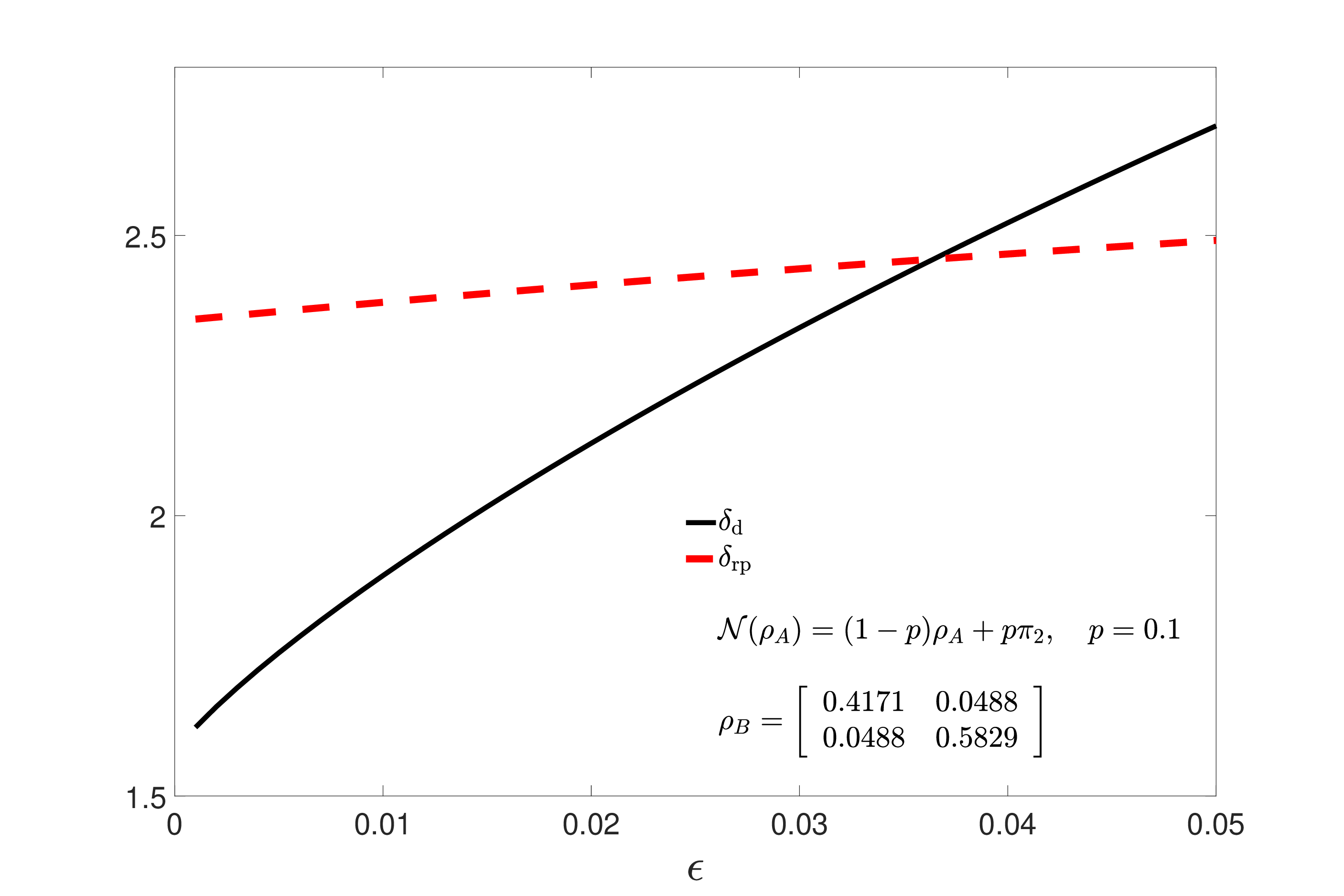}
  \caption{Comparison between $\delta_\de$ and and $\delta_\rp$.}
  \label{fig-compare}
\end{figure}

\section{Quantum Decoupling}\label{sec-decouple}

This section provides a proof of Theorem \ref{thm-decouple-1shot}.

\begin{proof}[Proof of Theorem \ref{thm-decouple-1shot}]
We apply random coding. The proof can be divided into three steps.
The first step is to obtain a $\tlQ_2$ upper bound. Inserting $\sigma_{BE}, \rho_{BE}$ from \eqref{sig_BE} and \eqref{rho_BE} (here $\sigma_{BE}:= \sigma_{B^1E^1}$ is the one-shot output) in Lemma \ref{lemma-D<Q2} yields that
\begin{equation}\label{de:ED<Q2}
    \frac{1}{\log e} \mathbb{E}_U D(\sigma_{BE}\|\rho_{BE}) 
        \leq \Eutr\left[\left(\rho_{BE}^{-\frac14} \sigma_{BE} \rho_{BE}^{-\frac14}\right)^2 \right] - \Eutr[\sigma_{BE}].
\end{equation}
Following the notations in \cite[Sec.~3.3]{dupuis2014one}, we define $\tlmT := \tau_B^{-\frac14} \mT \tau_B^{-\frac14}$ and unnormalized states 
\begin{equation}\label{tlrho-tltau}
    \tlrho_{AE} := (I_A \otimes \rho_E^{-\frac14}) \rho_{AE} (I_A \otimes \rho_E^{-\frac14}), \quad 
    \tltau_{AB} := (I_A \otimes \tau_B^{-\frac14}) \tau_{AB} (I_A \otimes \tau_B^{-\frac14}).
\end{equation}
Tracing out system $A$ gives $\tltau_B = \sqrt{\tau_B}$ and $\tlrho_E = \sqrt{\rho_E}$. Hence, one can further define $\tlrho_{BE} := \tltau_B\otimes\tlrho_E = \sqrt{\rho_{BE}}$ and
\begin{align*}
    \tlsig_{BE} 
    := \ & \rho_{BE}^{-\frac14} \sigma_{BE} \rho_{BE}^{-\frac14} 
    = (\tau_B^{-\frac14} \otimes \rho_E^{-\frac14}) (\mT \otimes \id_E) 
        \left( (U_A \otimes I_E)\rho_{AE} (U_A^\dagger \otimes I_E) \right) 
        (\tau_B^{-\frac14} \otimes \rho_E^{-\frac14}) \\
    = \ & (\tlmT \otimes I_E) 
        \left( (U_A \otimes I_E) \tlrho_{AE} (U_A^\dagger \otimes I_E) \right),
\end{align*}
It is shown in \cite[Sec.~3.3]{dupuis2014one} that $\mathbb{E}_U \sigma_{BE} = \rho_{BE} = \tlrho_{BE}^2$. Then \eqref{de:ED<Q2} reduces to
\begin{align}\label{decouple-1shot-Q2}
    \frac{1}{\log e} \mathbb{E}_U D(\sigma_{BE}\|\rho_{BE})
    &\leq \Eutr\big[\tlsig_{BE}^2\big] - \tr\big[\tlrho_{BE}^2\big] 
    \overset{a}{\leq}
        \tr\big[\tltau_{AB}^2\big]  \tr\big[\tlrho_{AE}^2\big]  \notag\\
    &\overset{b}{=}
        \tlQ_2(\tau_{AB}\| I_A \otimes \tau_B) \ \tlQ_2(\rho_{AE}\| I_A \otimes \rho_E).
\end{align}
Here $(a)$ follows also from \cite[Sec.~3.3]{dupuis2014one}. In $(b)$, we identified the $\tlQ_2$ terms from \eqref{tlrho-tltau}. 

In the second step, we perform smoothing and extend \eqref{decouple-1shot-Q2} to an $H_{\min}^\epsilon$ upper bound. There exist $\htau_{AB} \in \mathscr{B}_\leq^\epsilon(\tau_{AB})$ and $\hrho_{AE} \in \mathscr{B}_\leq^\epsilon(\rho_{AE})$ such that
\begin{equation}\label{HminABandAE}
    H_{\min}(A|B)_{\htau_{AB}} =        
        H_{\min}^\epsilon(A|B)_{\tau_{AB}}, \quad 
    H_{\min}(A|E)_{\hrho_{AE}} = 
        H_{\min}^\epsilon(A|E)_{\rho_{AE}}
\end{equation}
According to \cite[Sec.~3.4]{dupuis2014one}, we can correspondingly construct subnormalized states 
\begin{align*}
    \hsig_{BE} := 
        (\hmT \otimes \id_E)(U_A \otimes I_E) \hrho_{AE}(U_A^\dagger \otimes I_E), \quad 
    \hrho_{BE} := \htau_B\otimes\hrho_E
\end{align*}
with a unique superoperator $\hmT$ such that $\htau_{AB} = \frac{1}{d} (\id_A\otimes \hmT) \ket{\Gamma}_{AA} \bra{\Gamma} $; consequently, 
\begin{equation}\label{de:sig-rho}
    \mathbb{E}_U \left\|\sigma_{BE} - \hsig_{BE}\right\|_1 \leq 8\epsilon, \quad \left\|\rho_{BE} - \hrho_{BE}\right\|_1 \leq 4\epsilon.
\end{equation}
For this new pair $(\hsig_{BE},\hrho_{BE})$, \eqref{decouple-1shot-Q2} gives
\begin{align}\label{de:ED<Dmax^ep}
    \frac{1}{\log e} \mathbb{E}_U D(\hsig_{BE}\|\hrho_{BE}) 
    &\leq \tlQ_2(\htau_{AB}\| I_A \otimes \htau_B) \ \tlQ_2(\hrho_{AE}\| I_A \otimes \hrho_E) \notag\\
    &\overset{a}{\leq} \exp_2 \left\{ - H_{\min}(A|B)_{\htau_{AB}} - H_{\min}(A|E)_{\hrho_{AE}} \right\} \notag\\
    &\overset{b}{=} \exp_2 \left\{ - H_{\min}^\epsilon(A|B)_{\tau_{AB}} - H_{\min}^\epsilon(A|E)_{\rho_{AE}} \right\},
\end{align}
where $(a)$ follows from Lemma \ref{lem-duality} and Definition \ref{def-Hmin}, and $(b)$ follows from \eqref{HminABandAE}.

Finally, we apply the continuity bound. Inserting \eqref{de:sig-rho} and \eqref{de:ED<Dmax^ep} in Lemma \ref{lem-contin-rela} gives our result.
\end{proof}

\section{Quantum Soft Covering}\label{sec-cover}

In this section, we prove our results in quantum soft covering. A key technique used in our proof is random coding. Ergo, we begin with formulating the construction of a random quantum covering code, which is illustrated in Figure \ref{fig-cover-decouple}.                                                                      \begin{figure}[h]
  \centering
  \includegraphics[width=0.5\textwidth]{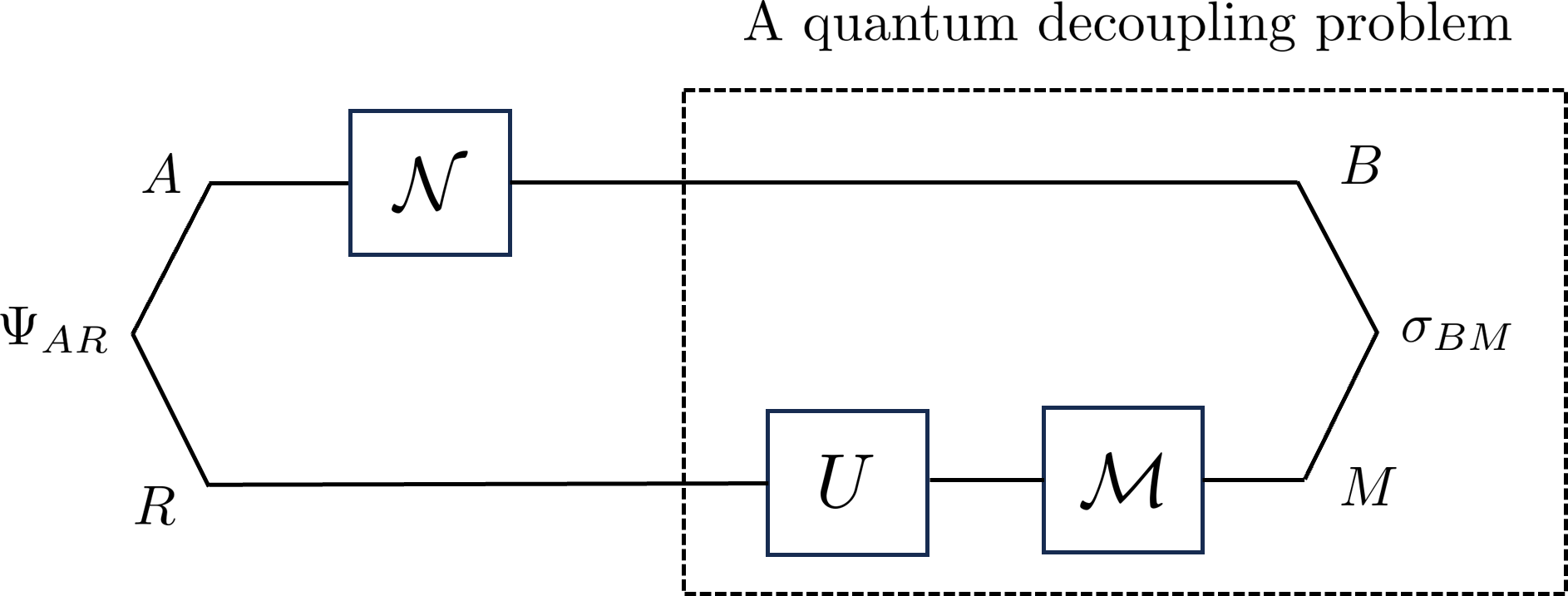}
  \caption{Random code construction in quantum soft covering}
  \label{fig-cover-decouple}
\end{figure}

\begin{itemize}[leftmargin=12.5pt]

\item [1)] Consider a state $\rho_A \in \mS(\rho_B, \mN)$. Let $|\mH_A| = d$. Consider the canonical purification of $\rho_A$:
$$ \ket{\psi} := \ket{\psi}_{AR} = (\sqrt{\rho_A}\otimes I_R) \ket{\Gamma}_{AR} = (\sqrt{\rho_A}\otimes I_R) \displaystyle{\sum_i} \ket{i}_A \ket{i}_R, $$
where $\{\ket{i}_A\}$ and $\{\ket{i}_R\}$ denote a basis for $\mH_A$ and $\mH_R$, respectively. In addition, we define $\Psi_{AR} := \ket{\psi}\bra{\psi}$.

\item [2)] The quantum channel $\mN$ acts on $\rho_A$ while leaving the reference system unchanged, yielding a bipartite output $\rho_{BR} = (\mN \otimes \id_R) (\Psi_{AR})$.

\item [3)] A random unitary operator $U$ with $|U| = d$ acts on system $R$. 

\item [4)] Perform a projection measurement $\mM = \{P_m\}_{m = 1,\dots,M}$ that consists of $M$ mutually orthogonal projectors which sum up to identity, i.e., $\sum_m P_m = I_M$. Each projector $P_m$ has the same rank $|P_m| = \Theta$, so $M\Theta = d$. The outcome $m$ is stored in a classical register $\ket{m}$. This measurement $\mM(\cdot)$ acting on any operator $\theta\in\mL(\mH_R)$ yields a classical state
\begin{equation}\label{proj}
    \mM(\theta)_M = \sum_m \tr[P_m \theta] \ket{m} \bra{m}.
\end{equation}
Note that for each outcome $m$, $(I\otimes P_m U)\ket{\psi}$ is a pure state with Schmidt rank no greater than $\Theta$, so the rank in system $A$ conditioned on $m$ is also no greater than $\Theta$.

\item [5)] After applying $\mM \circ U$ on system $R$, the state in system $BM$ becomes
\begin{align}
    \sigma_{BM} 
    &= (\id_B \otimes \mM)(I_B \otimes U) \rho_{BR} (I_B \otimes U^\dagger)  \label{sig_BM} \\
    &= \sum_m \tr_R \left[(I_B\otimes P_m U) \rho_{BR} (I_B\otimes U^\dagger P_m) \right] \otimes \ket{m} \bra{m}, \label{sig_BM-sum}
\end{align}
which is a normalized CQ state. Also, define
\begin{equation}\label{rho_BM}
    \rho_{BM} := \rho_B \otimes \pi_M 
    = \rho_B \otimes \frac{I_M}{M}.
\end{equation}

\end{itemize}

We have the following lemma.
\begin{lemma}\label{lem-exist sigB}
    If $\mathbb{E}_U D(\sigma_{BM}\|\rho_{BM}) \leq \epsilon$, then there exists a state $\sigma_A \in \mD(\mH_A)$ such that $D(\mN(\sigma_A)\|\rho_B) \leq \epsilon$ and $|\sigma_A| \leq \Theta$. Similarly, if \ $\mathbb{E}_U \left\|\sigma_{BM}-\rho_{BM}\right\|_1 \leq \epsilon$, then there exists a state $\sigma_A \in \mD(\mH_A)$ such that $\left\|\mN(\sigma_A)-\rho_B\right\|_1 \leq 2\epsilon$ and $|\sigma_A| \leq \Theta$.
\end{lemma}

\begin{proof}
In \eqref{sig_BM-sum} we define $p_m := \braket{m| \tr_B[\sigma_{BM}]|m}$ and can rewrite \eqref{sig_BM-sum} as
$$ \sigma_{BM} = \sum_m p_m \sigma_m \otimes \ket{m}\bra{m}, $$
which is a CQ state, where $\{p_m\}$ is a PMF and each $\sigma_m\in\mD(\mH_B)$ is a normalized state. 

We first show the relative entropy case. Note that
\begin{equation}\label{rela-BM}
    D(\sigma_{BM}\|\rho_{BM}) 
    = \sum_m p_m D(\sigma_m \| \rho_B) + D\left( p_1, \dots, p_M \bigg\| \frac{1}{M}, \dots, \frac{1}{M} \right).
\end{equation}
If $\mathbb{E}_U D(\sigma_{BM}\|\rho_{BM}) \leq \epsilon$, then there exists some unitary $U$ such that $D(\sigma_{BM}\|\rho_{BM}) \leq \epsilon$. Since all terms on the RHS of \eqref{rela-BM} are positive, there exists an outcome $m$ such that $D(\sigma_m \| \rho_B) \leq \epsilon$. 

For the trace distance case, similar arguments can be established by noting that
\begin{align*}
    \sum_m p_m \left\|\sigma_m -  \rho_B\right\|_1 
    &= \sum_m \left\|p_m \sigma_m - \frac{1}{M} \rho_B + \frac{1}{M} \rho_B - p_m \rho_B\right\|_1 \\
    &\leq \sum_m \left\|p_m \sigma_m - \frac{1}{M} \rho_B\right\|_1
        + \sum_m \left|\frac{1}{M} - p_m \right| \\
    &= \left\|\sigma_{BM}-\rho_{BM}\right\|_1 + \left\|\sigma_M-\rho_M\right\|_1 \\
    &\leq 2 \left\|\sigma_{BM}-\rho_{BM}\right\|_1 \leq 2\epsilon.
\end{align*}

In both cases, the obtained $\sigma_m$ can be written as $\sigma_m = \mN(\sigma_A)$ for some $\sigma_A \in \mD(\mH_A)$. From \eqref{sig_BM-sum} we can check that $|\sigma_A| \leq \Theta$ because $|P_m| \leq \Theta$. 
\end{proof}

\subsection{One-shot Quantum Soft Covering with Relative Entropy Criterion}\label{sec-cover-rela}

\begin{proof}[Proof of Theorem \ref{thm-cover-1shot}]
We first show achievability. Similar to the Section \ref{sec-decouple}, there are three steps. First, on the $R$ side we have a CPTP map $\mM$ defined by \eqref{proj}, which yields a CQ Choi operator
\begin{align}\label{tau_RM}
    \tau_{RM} 
    &= \frac{1}{d}(\id_B\otimes\mM) \left( \ket{\Gamma}_{RR'}\bra{\Gamma} \right)
    = \frac{1}{d} \sum_{ij} (\id_B\otimes\mM) \left(\ket{ii} \bra{jj} \right) \notag\\
    &= \frac{1}{d} \sum_m \sum_{ij} \braket{j|P_m|i} \ket{i}\bra{j} \otimes \ket{m} \bra{m}
    = \frac{1}{d} \sum_m P_m^T \otimes \ket{m}\bra{m}
\end{align}
with a marginal $\tau_M = \pi_M$. Hence, $\rho_{BM}$ in \eqref{rho_BM} can exactly be written as $\rho_{BM} = \rho_B \otimes \tau_M = \rho_B \otimes \mM(\pi_R)$. We may directly apply our decoupling result \eqref{decouple-1shot-Q2} (see Figure \ref{fig-cover-decouple}) and write
\begin{align}\label{cover-1shot-Q2}
    \mathbb{E}_U D(\sigma_{BM}\|\rho_{BM}) \leq
    \frac{\log e}{\Theta}\tlQ_2(\rho_{BR}\| \rho_B \otimes I_R),
\end{align}
where we have used the fact that $M\Theta = d$ and hence
$\tlQ_2(\tau_{RM}\| I_R \otimes \tau_M) = \frac{1}{\Theta}$. 

Next, we smooth the term $\tlQ_2(\rho_{BR}\| \rho_B \otimes I_R)$ in \eqref{cover-1shot-Q2} to a $-H_{\min}^\epsilon$ upper bound. By Definition \ref{def-Hmin^ep}, there exists a subnormalized state $\hrho_{BR} \in \mathscr{B}_\leq^\epsilon(\rho_{BR})$ such that 
\begin{equation}\label{find-hrhoBR}
    H_{\min}(R|B)_{\hrho_{BR}} = H_{\min}^\epsilon(R|B)_{\rho_{BR}}.
\end{equation}
Then we construct $\hsig_{BM}$ and $\hrho_{BM}$ corresponding to this $\hrho_{BR}$. We again employ the measurement $\mM$ and set $\hrho_{BM} := \hrho_B\otimes\tau_M$. Then we obtain $\hsig_{BM}$ analogous to \eqref{sig_BM}, by applying the operation $\mM\circ U$ on $\hrho_{BR}$:
$$ \hsig_{BM} = (\id_B \otimes \mM)(I_B \otimes U) \hrho_{BR} (I_B \otimes U^\dagger). $$
For this new $(\hsig_{BM}, \hrho_{BM})$ pair, \eqref{cover-1shot-Q2} gives
\begin{align}\label{Q:ED<Dmax^ep}
    \mathbb{E}D(\hsig_{BM} \| \hrho_{BM}) 
    &\leq \frac{\log e}{\Theta} \exp_2 \left\{ \log \tlQ_2(\hrho_{BR}\|\hrho_B \otimes I_R) \right\} \notag\\
    &\overset{a}{\leq} \frac{\log e}{\Theta} \exp_2 \left\{ \inf_{\xi_B \in \mD(\mH_B)} D_{\max}(\hrho_{BR}\| \xi_B \otimes I_R) \right\} \notag\\
    &\overset{b}{=} \frac{\log e}{\Theta} \exp_2 \left\{ -H_{\min}(R|B)_{\hrho_{BR}} \right\} \notag\\
    &\overset{c}{=} \frac{\log e}{\Theta} \exp_2 \left\{ -H_{\min}^\epsilon(R|B)_{\rho_{BR}}\right\},
\end{align}
where $(a)$ follows by Lemma \ref{lem-duality}, $(b)$ by Definition \ref{def-Hmin}, and $(c)$ by \eqref{find-hrhoBR}.
Then we need to evaluate trace distances $\mathbb{E}_U\left\|\sigma_{BM} - \hsig_{BM}\right\|_1$ and $\left\|\rho_{BM} - \hrho_{BM}\right\|_1$. The latter one should be straightforward: since $\hrho_{BR} \in \mathscr{B}_\leq^\epsilon(\rho_{BR})$, we have $\left\|\rho_{BR} - \hrho_{BR}\right\|_1 \leq 2\epsilon$ \cite[Lem.~6]{tomamichel2010duality} and thus 
\begin{equation}\label{Q:rho-rho<2ep}
    \left\|\rho_{BM} - \hrho_{BM}\right\|_1 = \left\|\rho_B - \hrho_B\right\|_1 \leq 2\epsilon,
\end{equation} 
as the data-processing inequality holds for unnormalized states \cite[Eq.~(4.1.7)]{khatri2020principles}. The other term $\mathbb{E}_U\left\|\sigma_{BM} - \hsig_{BM}\right\|_1$ can also be bounded using the data-processing inequality, since $\rho_{BR}$ and $\hrho_{BR}$ undergo the same channel $\mM\circ U$. Specifically, 
\begin{align}\label{Q:sig-sig<4ep}
    \mathbb{E}_U \left\|\sigma_{BM} - \hsig_{BM}\right\|_1
    & = \mathbb{E}_U \left\|(\id_B \otimes \mM)(I_B \otimes U) (\rho_{BR} - \hrho_{BR}) (I_B \otimes U^\dagger)\right\|_1 \notag \\
    &\leq \mathbb{E}_U \left\|(I_B \otimes U) (\rho_{BR} - \hrho_{BR}) (I_B \otimes U^\dagger)\right\|_1 \notag \\
    &= \left\|\rho_{BR} - \hrho_{BR}\right\|_1 \leq 2\epsilon.
\end{align}

Finally, we apply continuity. Inserting \eqref{Q:ED<Dmax^ep}, \eqref{Q:rho-rho<2ep}, and \eqref{Q:sig-sig<4ep} in Lemma \ref{lem-contin-rela} gives
\begin{equation}\label{ED-1shot-Q}
    \mathbb{E} D(\sigma_{BM} \| \rho_{BM}) 
    \leq \frac{\log e}{\Theta} \exp_2 \left\{ -H_{\min}^\epsilon(R|B)_{\rho_{BR}} \right\} + 4\epsilon \log M|\mH_B| + 2\ell(2\epsilon).
\end{equation}
Combining \eqref{ED-1shot-Q} and Lemma \ref{lem-exist sigB} gives our achievability result. Note that we have $M\leq d=|H_A|$. Also, we set $\eta = 2^{-H_{\min}^\epsilon(R|B) - \log\Theta}$ and further upper-bound it by $2\eta$ to account for rounding $\Theta$ down to the nearest smaller integer.

Then we prove the converse. Consider any $\rho'_A \in \mS_\delta(\rho_B, \mN)$ and a purification $\Psi'_{AR}$ of $\rho'_A$. We have 
\begin{align*}
    \log |\rho'_A|  
    \overset{a}{\geq} H_{\max}(R)_{\Psi'_{AR}} 
    \overset{b}{=} -H_{\min}(R|A)_{\Psi'_{AR}} 
    \overset{c}{\geq} -H_{\min}(R|B)_{\rho'_{BR}},
\end{align*}
where $(a)$ follows from the fact that R\'enyi entropy is monotone decreasing \cite[Cor.~4.13]{tomamichel-book}, $(b)$ follows from the duality between min- and max- entropies \cite[Prop.~5.7]{tomamichel-book}, $(c)$ follows from quantum data processing inequality \cite[Thm.~4.18]{tomamichel-book}, and $\rho'_{BR} = (\mN\otimes \id_R)\Psi'_{AR}$. The desired result follows by minimizing over all states in $\mS_{\delta}(\rho_B,\mN)$. 
\end{proof}

\subsection{Asymptotic Quantum Soft Covering with Relative Entropy Criterion} \label{sec-cover-asym}

\begin{proof}[Proof of Theorem \ref{thm-cover-asym}]
We first show the achievability. For any $\rho_A\in\mS(\rho_B,\mN)$, let $r > I(R\rangle B)_{\rho_{BR}}$, and $\zeta>0$ be arbitrary. Choose $\Theta=2^{n(r+\zeta)}$. By Definition \ref{def-Hmin},
\begin{equation}\label{Hmin<Dmax}
    -H_{\min}^\epsilon(R|B)_{\rho_{BR}} \leq D_{\max}^\epsilon(\rho_{BR} \| \rho_B \otimes I_R);
\end{equation}
thus in the $n$-shot case, \eqref{ED-1shot-Q} gives 
\begin{equation}\label{ED-nshot-Q}
    \mathbb{E}D(\sigma_{B^n M^n}\|\rho_{BM}^\tenn) 
    \leq \frac{\log e}{\Theta} \exp_2 \left\{D_{\max}^\epsilon(\rho_{BR}^\tenn \| \rho_B^\tenn \otimes I_R^\tenn) \right\} + 4\epsilon n \log|\mH_A||\mH_B| + 2\ell(2\epsilon).
\end{equation}
According to Lemma \ref{lem-Dmax-AEP}, the $D_{\max}^\epsilon$ term has the following AEP.
\begin{equation}\label{Hmin-AEP}
    \frac{1}{n} D_{\max}^\epsilon(\rho_{BR}^\tenn \| \rho_B^\tenn \otimes I_R^\tenn) \\
    = I(R\rangle B)_{\rho_{BR}} - \sqrt{\frac{ V(\rho_{BR}\| \rho_B \otimes I_R)} {n} } \ \Phi^{-1}(\epsilon^2) + O\left(\frac{\log n}{n}\right). 
\end{equation}
Note the $\epsilon n$ term on the RHS of \eqref{ED-nshot-Q}. To make \eqref{ED-nshot-Q} vanish at $n\to\infty$, we choose $\epsilon = \epsilon_n = \frac{1}{2e} n^{-\gamma}$ with $\gamma>1$. Under such a choice, 
\begin{equation}\label{Phi-to-0}
    \frac{\left|\Phi^{-1}(\epsilon_n^2)\right|}{\sqrt{n}} \overset{a}{\leq} \sqrt{\frac{-4\ln\epsilon_n}{n}} = \sqrt{\frac{4(1+\ln2)}{n} + 4\gamma \frac{\ln n}{n}} \to 0
\end{equation}
as $n\to\infty$ where in $(a)$ we have used \eqref{Phi<ln}.
Consequently, \eqref{ED-nshot-Q} vanishes and hence $r$ is achievable.

Then we show the converse, i.e., any achievable rate needs to be greater than $\inf_{\rho_A \in \mS(\rho_B,\mN)} I_c(\rho_A,\mN)$. Suppose $r$ is achievable. Then there is a code such that $D(\sigma_B\|\rho_B) \leq \delta$ where $\sigma_B$ is the encoded output state. By Lemma \ref{pinsker}, this code further satisfies $\left\|\sigma_B - \rho_B \right\|_1 \leq \sqrt{(2\ln2) \delta}$. Thus, we can directly apply the proof of converse in the trace-norm quantum covering problem, e.g., \cite[Thm.~3.5]{atif2024quantum}.
\end{proof}

\subsection{Error Exponent for Quantum Soft Covering}\label{sec-cover-errexp}

\begin{proof}[Proof of Theorem \ref{thm-cover-errexp-l1}]
Similar to Section \ref{sec-cover-rela}, we identify the random code construction after state $\rho_{BR}$ as a decoupling problem (see Figure \ref{fig-cover-decouple}). Therefore, we pick any $\rho_A\in\mS(\rho_B,\mN)$ and construct $\sigma_{BM},\rho_{BM}$ defined by \eqref{sig_BM} and \eqref{rho_BM}. 
It is shown in \cite[Thm.~2]{cheng2024joint} that 
\begin{equation}\label{EU-exp}
    \mathbb{E}_U \left\|\sigma_{BM} - \rho_{BM} \right\|_1
    \leq 2\cdot 3^{\frac{1-\alpha}{\alpha}}
        \exp_2\left\{ -\frac{\alpha-1}{\alpha}
        \left( \tilde{H}_\alpha^\uparrow(R|M)_{\tau_{RM}} + 
        \tilde{H}_\alpha^\uparrow(R|B)_{\rho_{BR}}\right) \right\}
\end{equation}
for any $\alpha\in[1,2]$. Here $\tau_{RM}$ is given by \eqref{tau_RM}.
By Definition \ref{def-Ha}, we have
$$ \tilde{H}_\alpha^\uparrow(R|M)_{\tau_{RM}}
    \geq - \tlD_\alpha(\tau_{RM}\|I_R\otimes\tau_M)
    = r $$
where the relation $M\Theta=d$ is used.
Also, $3^{\frac{1-\alpha}{\alpha}}$ in \eqref{EU-exp} contributes only as a constant in any $n$-shot case, so it suffices to write $3^{\frac{1-\alpha}{\alpha}} \leq 1$. Consequently, \eqref{EU-exp} has an upper bound
$$ \mathbb{E}_U \left\|\sigma_{BM} - \rho_{BM} \right\|_1
    \leq 2 \exp_2\left\{ -\frac{\alpha-1}{\alpha} \left( r + 
        \tilde{H}_\alpha^\uparrow(R|B)_{\rho_{BR}}\right) \right\}. $$
The desired conclusion thus follows immediately from Lemma \ref{lem-exist sigB} and optimizing over $\alpha\in[1,2]$. 
\end{proof}

\subsection{Second-order Rates for Quantum Soft Covering}\label{sec-cover-2nd}

\begin{proof}[Proof of Theorem \ref{thm-cover-2nd-l1}]
For simplicity, denote $I_c:=I_c(\rho_A,\mN)$ and $V:= V(\rho_A,\mN)$.  
Given a rate $r$, the optimizer $\alpha^*$ that yields $E(r)$ is solved via the stationary point equation
$$ 0 = \frac{d}{d\alpha}\left[
    \frac{\alpha-1}{\alpha}\left(r + \Ha\right)\right]_{\alpha=\alpha^*}
    = \frac{1}{(\alpha^*)^2}\left(r+\tlH_{\alpha^*}^\uparrow(R|B) \right)
    + \frac{\alpha^*-1}{\alpha^*} \frac{d}{d\alpha}\tHa \bigg|_{\alpha=\alpha^*}, $$
which yields an expression of $r$ in terms of $\alpha^*$. Therefore, We can relate $E(r)$ and $r$ parametrically as functions of $\alpha^*$ (for simplicity, we will use $\alpha$ to denote $\alpha^*$):
\begin{align*}
    r = r(\alpha) 
        &:= - \alpha(\alpha-1) \da \Ha - \Ha, \\
    E(r) = E_r(\alpha) 
        &:= \frac{\alpha-1}{\alpha}\left(r(\alpha) + \Ha\right) 
        = - (\alpha-1)^2 \da \Ha.
\end{align*}
Here we also write $\da:= \frac{\partial}{\partial\alpha}$, $\da^2:= \frac{\partial^2}{\partial\alpha^2}$, and $\Ha := \tHa$ to simplify notation. Then 
\begin{align*}
    \frac{dE(r)}{dr} 
    &= \frac{\da E_r(\alpha)}{\da r(\alpha)}
    = \frac{-2(\alpha-1)\da\Ha - (\alpha-1)^2\da^2\Ha}
        {-2\alpha\da\Ha - \alpha(\alpha-1)\da^2\Ha}
    = \frac{\alpha-1}{\alpha}, \\
    \frac{d^2E(r)}{dr^2}
    &= \frac{\da \frac{dE(r)}{dr}}{\da r(\alpha)}
    = \frac{1/\alpha^2}{-2\alpha\da\Ha - \alpha(\alpha-1)\da^2\Ha}
    = -\frac{1}{\alpha^3[2\da\Ha+(\alpha-1)\da^2\Ha]}.
\end{align*}
The optimizer $\alpha=1$ corresponds to $r = -\tlH_1^\uparrow(R|B) = -H(R|B) = I_c$. Expanding $E(r)$ at $r=I_c$ yields
\begin{align*}
    E(r)
    &= E(I_c) 
        + E'(I_c)(r-I_c)
        + \frac12 E''(I_c) (r-I_c)^2 
        + O\big((r-I_c)^3\big) \\
    &= E_r(\alpha=1) 
        + \frac{dE(r)}{dr}\bigg|_{\alpha=1}(r-I_c)
        + \frac12 \frac{d^2E(r)}{dr^2}\bigg|_{\alpha=1} (r-I_c)^2
        + O\big((r-I_c)^3\big) \\
    &= - \frac14 
        \left(\frac{dH_\alpha}{d\alpha}\bigg|_{\alpha=1}\right)^{-1} (r-I_c)^2 + O\big((r-I_c)^3\big) \\
    &= \frac{\log e}{2V} (r-I_c)^2 + O\big((r-I_c)^3\big),
\end{align*}
where the last equality follows from \cite[Prop.~11]{hayashi2016correlation}. Let $r$ approach $I_c$ from above at a rate
\begin{equation}\label{vn}
    r_n = I_c + \frac{v_n}{\sqrt{n}}
\end{equation}
with some sequence $v_n>0$. Then 
\begin{equation}\label{Er-n}
    E(r) = \frac{(\log e) v_n^2}{2Vn}
        + O\left(\frac{v_n^3}{n^{3/2}}\right).
\end{equation}
Let the trace-distance error be $\delta_n := \left\|\mN^\tenn(\sigma_{A^n}) - \rho_B^\tenn \right\|_1 $ with some encoded input state $\sigma_{A^n}\in\mD(\mH_A^\tenn)$.
From Theorem \ref{thm-cover-errexp-l1} we have
    $$ \frac{\delta_n}{4} \leq 2^{-nE(r)}
        = \exp_2\left\{ - \frac{\log e}{2V} v_n^2 + O\left(\frac{v_n^3}{\sqrt{n}}\right) \right\}. $$
Therefore, if $\lim_{n\to\infty}\delta_n \leq \delta$ for some fixed $\delta\in(0,1)$, we can set
$$ \exp_2\left\{ - \frac{\log e}{2V} v_n^2 \right\} \lesssim \frac{\delta}{4} $$
and hence 
$$ v_n^2 = \frac{2}{\log e} V \log\frac{4}{\delta} + o(1)
    = 2V \ln\frac{4}{\delta} + o(1). $$
The desired conclusion follows immediately from \eqref{vn}.
\end{proof}

\begin{proof}[Proof of Theorem \ref{thm-cover-2nd-D}]
Following notations in the previous proof, we write
$I_c:=I_c(\rho_A,\mN)$ and $V:= V(\rho_A,\mN)$. 
Let the relative-entropy covering error be $\delta_n := D\left(\mN^\tenn(\sigma_{A^n}) \| \rho_B^\tenn\right)$ for some encoded input state $\sigma_{A^n}\in\mD(\mH_A^\tenn)$.
Let $r$ approach $I_c$ via \eqref{vn} and thus the trace-distance error exponent scales as \eqref{Er-n}. According to the reverse Pinsker inequality in Lemma \ref{lem-revPinsker}, there exists a covering code such that
\begin{align*}
    \frac{\delta_n}{4} 
    \leq 2^{-nE(r)} n\left[ E(r) + \frac{\log|\mH_B|}{\eigrho}\right] 
    &\lesssim 2^{-\frac{\log e}{2V} v_n^2} 
        \left[ \frac{\log e}{2V} v_n^2
        + n \frac{\log|\mH_B|}{\eigrho} \right]. 
\end{align*}
Thus, to make $\lim_{n\to\infty}\delta_n \leq \delta$ for some given $\delta\in(0,1)$, we can take 
$$ 2^{-\frac{\log e}{2V} v_n^2} \cdot n \frac{\log|\mH_B|}{\eigrho} 
    \lesssim \frac{\delta}{8} $$
and hence 
$$ v_n = \sqrt{\frac{2V}{\log e} 
    \log\frac{8n\log|\mH_B|}{\eigrho\delta} + o(1)}
    = \sqrt{2V
    \ln\frac{8n\log|\mH_B|}{\eigrho\delta}} + o(1) $$
The desired conclusion follows immediately from \eqref{vn}.
\end{proof}

On the other hand, a second-order behavior for an $(n,\Theta,\delta)_D$ soft covering code can also be derived from the perspective of the one-shot achievability bound (Theorem \ref{thm-cover-1shot}). We present this derivation and compare it with Theorem \ref{thm-cover-2nd-D}.
For any $\rho_A\in\mS(\rho_B,\mN)$ we have $\rho_A^\tenn\in\mS(\rho_B^\tenn,\mN^\tenn)$, yielding a product bipartite output $\rho_{BR}^\tenn$. According to Theorem \ref{thm-cover-1shot}, for all $\epsilon\in(0,\frac{1}{2e})$ and $\eta\in(0,1)$, there exists an $(n,\Theta,\delta_n)_D$ quantum covering code with rate
\begin{align}\label{Theta-2nd}
    \log\Theta 
    &\leq \left[ -H_{\min}^\epsilon(R|B)_{\rho_{BR}^\tenn} - \log\eta \right]^+ \notag\\
    &\overset{a}{\leq} \left[ D_{\max}^\epsilon(\rho_{BR}^\tenn \| \rho_B^\tenn \otimes I_R^\tenn) - \log\eta \right]^+ \notag\\
    &\overset{b}{=} \left[ n I_c + \sqrt{n V} \left| \Phi^{-1}(\epsilon^2) \right| + O\left(\log n\right) - \log\eta \right]^+,
\end{align}
where $(a)$ follows from \eqref{Hmin<Dmax} and $(b)$ follows from \eqref{Hmin-AEP}. In addition, the error $\delta_n$ of this quantum covering code is bounded by
$$ \delta_n \leq 2(\log e)\eta + 4\epsilon n \log|\mH_A||\mH_B| + 2\ell(2\epsilon). $$
Picking some $\beta\in(0,1)$, we set $\eta = \frac{\beta}{2(\log e)}$ and $\epsilon = \frac{\beta}{n}$. 
By \eqref{Phi<ln} we have 
    $\left|\Phi^{-1}(\epsilon^2)\right| \lesssim \sqrt{4\ln(n/\beta)}$ 
and hence \eqref{Theta-2nd} reduces to
$$ \frac{1}{n}\log\Theta 
    \leq \left[ I_c + \sqrt{4V \ \frac{\ln (n/\beta)}{n}}  + O\left(\frac{\log n}{n}\right) \right]^+. $$
Furthermore, for large $n$, we can write $2\ell(2\epsilon)\leq\beta$ and then obtain $\delta_n \leq 2\beta + 4\beta \log|\mH_A||\mH_B|$. Thus, setting $\delta = 2\beta + 4\beta \log|\mH_A||\mH_B|$ can uniquely determine the constant $\beta = (2+4\log|\mH_A||\mH_B|)^{-1}\delta$. Thereby, we have shown the existence of an $(n,\Theta,\delta)_D$ whose rate scales as 
\begin{equation}\label{2nd-Dby1shot}
    \frac{1}{n}\log\Theta 
    = \left[ I_c(\rho_A,\mN) + \sqrt{4V(\rho_A,\mN) \ \frac{\ln (c'n)}{n}} + O\left(\frac{\log n}{n}\right) \right]^+
\end{equation}
with a constant $c' = (2+4\log|\mH_A||\mH_B|)\delta$. In \eqref{2nd-Dby1shot}, the dominant second-order term is $\sqrt{\frac{\ln n}{n}}$, which has coefficient $\sqrt{4V(\rho_A,\mN)}$, whereas the same term in \eqref{2nd-D} has a smaller coefficient $\sqrt{2V(\rho_A,\mN)}$. In conclusion, deriving the second-order term from the proposed error exponent $E(r)$ --- exactly as done in the proof of Theorem \ref{thm-cover-2nd-D} --- yields a tighter result than deriving it from the AEP of the one-shot achievability.

\section{Conclusion}
This work formulates the soft covering problems with relative entropy criterion for fully quantum channels and derives the achievable rate infimum. Our covering theorems achieve a closer approximation to the objective state than the usual trace-norm covering on account of the Pinsker inequality. To derive the one-shot bound, we provide a continuity and a quadratic bound of relative entropy in place of the triangle inequality and the quadratic bound of the trace norm, and apply the smoothing technique. These two bounds also yield a tighter result of the quantum decoupling theorem with relative entropy criterion. In addition, we establish achievable bounds on the error exponents and second-order rates for quantum soft covering.

\section*{Acknowledgements}
The authors would like to thank Touheed Anwar Atif and Andreas Winter for insightful discussions on this topic.

% \IEEEtriggeratref{4}

\appendices

\section{Continuity of Classical and Quantum Entropy for Subnormalized States}\label{app-conti}

The continuity of quantum entropy for normalized states is often given by the Fannes-Audenaert inequality \cite[Thms.~11.10.1 and 11.10.2]{wildebook}. Here, we present a similar continuity bound that applies to subnormalized states. We begin with the classical entropy and then extend to quantum entropy using the idea of pinching map. 

\begin{lemma}[Continuity of classical entropy] \label{lemma-contin-classical}
    Given two sequences $\{p_x\}_{x\in\mX}, \{q_x\}_{x\in\mX}$ such that $0 \leq p_x, q_x \leq 1$ for each $x\in\mX$ and that $$ T := \left\|p-q\right\|_1 = \sum_x \left| p_x - q_x \right| \leq \frac12, $$ we have
    $$ |H(p) - H(q)| \leq T \log|\mX| + \ell(T). $$
\end{lemma}

\begin{proof}
    We follow the strategy in \cite[Thm.~17.3.3]{coverbook} but extend it to unnormalized PMFs. Consider $v \leq \frac12$. For $0 \leq t \leq 1-v$, we have \cite[Thm.~17.3.3]{coverbook}
    $ |\ell(t) - \ell(t+v)| \leq -v\log v. $ 
    Define $r_x := |p_x - q_x|$. Then
    \begin{align*}
        |H(p) - H(q)| 
        &\leq \sum_x |-p_x \log p_x + q_x \log q_x| 
            \leq \sum_x -r_x \log r_x \\
        &= T \sum_x \frac{-r_x}{T} \left( \log\frac{r_x}{T} + \log T \right) 
        = T H\left(\frac{r_X}{T}\right) - T\log T \\
        &\overset{a}{\leq} T \log|\mX| - T\log T,
    \end{align*}
    where in $(a)$ we have used the fact that $r_X/T$ is a normalized and thus valid PMF.
\end{proof}

\begin{lemma}[Continuity of quantum entropy]\label{lem-contin-quantum}
    Given $\rho,\sigma \in \mD_\leq(\mH)$ such that $T := \left\|\rho - \sigma\right\|_1 \leq \frac1e$, we have
    $$ |H(\rho) - H(\sigma)| \leq T \log|\mH| + \ell(T). $$
\end{lemma}

\begin{proof}
    We follow the strategy in \cite[Thm.~11.10.1]{wildebook} but extend it to subnormalized states. Without loss of generality, assume that $H(\rho) \geq H(\sigma)$. Let the spectral decomposition of $\sigma$ be
    $$ \sigma = \sum_z \lambda_z \ket{z} \bra{z} $$
    and define the following ``pinching'' channel
    $$ \Delta(\rho) = \sum_z \ket{z} \braket{z|\rho|z} \bra{z}. $$
    Clearly $\Delta(\sigma) = \sigma$. Let $\tr[\rho] = 1/t$, so $t\rho$ is a normalized state. Then \cite[Cor.~11.9.3]{wildebook} gives $H(t\rho) \leq H(\Delta(t\rho)),$ where 
    \begin{align*}
        H(t\rho) = tH(\rho) - \log t, \quad
        H(\Delta(t\rho)) = tH(\Delta(\rho)) - \log t,
    \end{align*}
    so we have $H(\rho) \leq H(\Delta(\rho))$. 
    
    Define $T_\Delta := \left\| \Delta(\rho) - \Delta(\sigma)\right\|_1$. The data-processing inequality \cite[Eq.~(4.1.7)]{khatri2020principles} yields $T_\Delta \leq T \leq \frac1e$. We obtain
    \begin{align*}
        H(\rho) - H(\sigma) 
        \leq H(\Delta(\rho)) - H(\Delta(\sigma)) 
        \overset{a}{\leq} T_\Delta \log|\mH| + \ell(T_\Delta) 
        \overset{b}{\leq} T \log|\mH| + \ell(T),
    \end{align*}
    where $(a)$ is due to the fact that $\Delta(\rho)$ and $\Delta(\sigma)$ commute, so their quantum entropy and trace norm is the same as the classical entropy and norm, and hence we can apply Lemma \ref{lemma-contin-classical}; $(b)$ is because that $\ell(t) = -t\log t$ is monotone increasing on the interval $t\in[0,\frac1e]$.
\end{proof}

\begin{remark}
    Lemma \ref{lem-contin-quantum} cannot be generalized to any positive operators $\sigma,\rho\in\mP(\mH)$ because the classical sequences in Lemma \ref{lemma-contin-classical} are bounded from above.
\end{remark}

\section{Proofs of Some Lemmas}\label{app-proof}

\subsection{Proof of Lemma \ref{lem-contin-rela}}\label{app-proof-lem-contin}
\begin{proof}
Note that 
$$ D(\sigma\|\rho) - D(\hsig\|\hat{\rho}) = \left[H(\hsig) - H(\sigma)\right] + \tr\left[\hsig\log\hat{\rho} - \sigma\log\rho \right] $$
and thus
$$ \mathbb{E}D(\sigma\|\rho) - \mathbb{E}D(\hsig\|\hat{\rho}) = \mathbb{E} \left[H(\hsig) - H(\sigma)\right] + \left[H(\rho) - H(\hat{\rho})\right], $$
where the two terms are further bounded by, respectively, 
\begin{align*}
    \big|\mathbb{E}\left[H(\hsig) - H(\sigma)\right]\big| 
        &\leq \mathbb{E}|H(\hsig) - H(\sigma)| \\
        &\overset{a}{\leq} \mathbb{E} \left\|\sigma - \hsig\right\|_1 \log|\mH| + \mathbb{E} \ell (\left\|\sigma - \hsig\right\|_1) \\
        &\overset{b}{\leq} \mathbb{E} \left\|\sigma - \hsig\right\|_1 \log|\mH| + \ell(\mathbb{E} \left\|\sigma - \hsig\right\|_1) \\
        &\leq \epsilon_\sigma \log|\mH| + \ell(\epsilon_\sigma), \\
    \big|H(\rho) - H(\hat{\rho})\big|
        &\overset{c}{\leq} \epsilon_\rho \log|\mH| + \ell(\epsilon_\rho).
\end{align*}
Here $(a)$ and $(c)$ follow from Lemma \ref{lem-contin-quantum} in Appendix \ref{app-conti} and $(b)$ follows from the concavity of $\ell(\cdot)$.
\end{proof}

\subsection{Proof of Lemma \ref{lem-duality}}\label{app-proof-duality}

\begin{proof}
Let $\tr[\rho_{AB}] = \tr[\rho_B] = 1/t$. Then $ t \rho_{AB}, t \rho_B$ are normalized states. Note that
\begin{align*}
    \tlD_2 (t\rho_{AB} \|\tau_A \otimes t\rho_B)
        &= \log \tlQ_2 \left( t\rho_{AB} \|\tau_A \otimes t\rho_B \right) \\
        &= \log \tr \left[ t\rho_{AB} \left(\tau_A \otimes t\rho_B\right)^{-\frac12} t\rho_{AB} \left(\tau_A \otimes t\rho_B\right)^{-\frac12} \right] \\
        &= \log\tlQ_2 (\rho_{AB} \| \tau_A \otimes \rho_B) + \log t, \\
    \inf_{\xi_B \in \mD(\mH_B)} D_{\max}(t\rho_{AB}\|\tau_A \otimes \xi_B)
        &= \inf_{\xi_B \in \mD(\mH_B)} D_{\max}(t\rho_{AB}\|\tau_A \otimes \xi_B) \\
        &= \inf_{\xi_B \in \mD(\mH_B)} \inf \left\{\lambda \in \mathbb{R}: t\rho_{AB} \leq 2^\lambda (\tau_A \otimes \xi_B) \right\} \\
        &= \inf_{\xi_B \in \mD(\mH_B)} \inf \left\{\lambda \in \mathbb{R}: \rho_{AB} \leq 2^{\lambda - \log t} (\tau_A \otimes \xi_B) \right\} \\
        &\overset{a}{=} \inf_{\xi_B \in \mD(\mH_B)} 
            \inf \left\{\mu\in\mathbb{R}: \rho_{AB} \leq 2^\mu (\tau_A \otimes \xi_B) \right\} + \log t \\
        &= \inf_{\xi_B \in \mD(\mH_B)} D_{\max}(\rho_{AB}\|\tau_A \otimes \xi_B) + \log t,
\end{align*}
where $(a)$ follows by setting $\mu := \lambda - \log t$. To reach the desired conclusion, it suffices to show that 
\begin{equation}\label{dual}
    \tlD_2 (t\rho_{AB} \|\tau_A \otimes t\rho_B) 
    \leq \inf_{\xi_B \in \mD(\mH_B)} D_{\max}(t\rho_{AB}\|\tau_A \otimes \xi_B).
\end{equation}
Write $\sigma_{AB} := t\rho_{AB}\in\mD(\mH_A\otimes\mH_B)$ and consider any purification $\sigma_{ABC}$ of $\sigma_{AB}$.
Since $D_{\max} = \tlD_\infty$ (see Definition \ref{def-Dmax}), We have \cite[Lem.~6]{hayashi2016correlation}
\begin{align*}
    \tlD_2(t\rho_{AB}\|\tau_A \otimes \rho_B) 
        &= -\inf_{\sigma_C \in \mD(\mH_C)} D_{\frac12}(\sigma_{AC}\|\tau_A^{-1} \otimes \sigma_C), \\
    \inf_{\xi_B \in \mD(\mH_B)} D_{\max}(t\rho_{AB}\|\tau_A \otimes \xi_B) 
        &= -\inf_{\sigma_C \in \mD({H}_C)} \tlD_{\frac12}(\sigma_{AC}\|\tau_A^{-1} \otimes \sigma_C).
\end{align*}
Then \eqref{dual} is obvious since $D_{\frac12} \geq \tlD_{\frac12}$ \cite[Eq.~(4.5.45)]{khatri2020principles}.
\end{proof}

\subsection{Proof of Lemma \ref{lem-revPinsker}}\label{app-proof-revPinsker}
\begin{proof}
We begin with the definition of relative entropy.
\begin{align*}
    D(\sigma\|\rho) &= \tr\left[\sigma\log\sigma - \sigma\log\rho \right] \\
    &= \tr\left[\sigma\log\sigma - \rho\log\rho + \rho\log\rho - \sigma\log\rho \right] \\
    &= H(\rho) - H(\sigma) + \tr\left[(\rho-\sigma) \log\rho \right] \\
    &\overset{a}{\leq} |H(\rho) - H(\sigma)| 
        + \left\|\rho-\sigma\right\|_1 \left\|\log\rho\right\|_\infty \\
    &\overset{b}{\leq} \epsilon \log |\mH| + \ell(\epsilon) 
        + \epsilon \log \frac{1}{\eigrho} \\
    &= \epsilon \log \frac{|\mH|}{\eigrho} + \ell(\epsilon),
\end{align*}
where in $(a)$ we bound the second term $\tr\left[(\rho-\sigma) \log\rho \right]$ by H\"older's inequality and in $(b)$ we bound the first term $|H(\rho) - H(\sigma)|$ by Lemma \ref{lem-contin-quantum} in Appendix \ref{app-conti}.
\end{proof}

\bibliographystyle{IEEEtran}
\bibliography{ref}

@INPROCEEDINGS{he2024,
  author={He, Xingyi and Atif, Touheed Anwar and Pradhan, S. Sandeep},
  booktitle={2024 IEEE International Symposium on Information Theory (ISIT)}, 
  title={Quantum Soft Covering and Decoupling with Relative Entropy Criterion}, 
  year={2024},
  volume={},
  number={},
  pages={1426-1431}}

@article{bloch2013strong,
  title={Strong secrecy from channel resolvability},
  author={Bloch, Matthieu R and Laneman, J Nicholas},
  journal={IEEE Transactions on Information Theory},
  volume={59},
  number={12},
  pages={8077--8098},
  year={2013},
  publisher={IEEE}
}

@article{hayashi2006general,
  author={Hayashi, M.},
  journal={IEEE Transactions on Information Theory}, 
  title={General nonasymptotic and asymptotic formulas in channel resolvability and identification capacity and their application to the wiretap channel}, 
  year={2006},
  volume={52},
  number={4},
  pages={1562-1575}}

@article{csiszar2008axiomatic,
  title={Axiomatic characterizations of information measures},
  author={Csisz{\'a}r, Imre},
  journal={Entropy},
  volume={10},
  number={3},
  pages={261--273},
  year={2008},
  publisher={Molecular Diversity Preservation International}
}

@book{tomamichel-book,
  title={Quantum information processing with finite resources: mathematical foundations},
  author={Tomamichel, Marco},
  volume={5},
  year={2015},
  publisher={Springer}
}

@article{renner2008security,
  title={Security of quantum key distribution},
  author={Renner, Renato},
  journal={International Journal of Quantum Information},
  volume={6},
  number={01},
  pages={1--127},
  year={2008},
  publisher={World Scientific}
}

@article{wildebook,
  title={From classical to quantum Shannon theory},
  author={Wilde, Mark M},
  journal={arXiv preprint arXiv:1106.1445},
  year={2011}
}

@article{tomamichel2014relating,
  title={Relating different quantum generalizations of the conditional R{\'e}nyi entropy},
  author={Tomamichel, Marco and Berta, Mario and Hayashi, Masahito},
  journal={Journal of Mathematical Physics},
  volume={55},
  number={8},
  year={2014},
  publisher={AIP Publishing}
}

@article{atif2024quantum,
  title={Quantum soft-covering lemma with applications to rate-distortion coding, resolvability and identification via quantum channels},
  author={Atif, Touheed Anwar and Pradhan, S Sandeep and Winter, Andreas},
  journal={arXiv preprint arXiv:2306.12416},
  year={2023}
}

@article{dupuis2014one,
  title={One-shot decoupling},
  author={Dupuis, Fr{\'e}d{\'e}ric and Berta, Mario and Wullschleger, J{\"u}rg and Renner, Renato},
  journal={Communications in Mathematical Physics},
  volume={328},
  pages={251--284},
  year={2014},
  publisher={Springer}
}

@article{hayashi2016correlation,
  title={Correlation detection and an operational interpretation of the R{\'e}nyi mutual information},
  author={Hayashi, Masahito and Tomamichel, Marco},
  journal={Journal of Mathematical Physics},
  volume={57},
  number={10},
  year={2016},
  publisher={AIP Publishing}
}

@article{anshu2019convex,
  title={Convex-split and hypothesis testing approach to one-shot quantum measurement compression and randomness extraction},
  author={Anshu, Anurag and Jain, Rahul and Warsi, Naqueeb Ahmad},
  journal={IEEE Transactions on Information Theory},
  volume={65},
  number={9},
  pages={5905--5924},
  year={2019},
  publisher={IEEE}
}

@article{tomamichel2013hierarchy,
  title={A hierarchy of information quantities for finite block length analysis of quantum tasks},
  author={Tomamichel, Marco and Hayashi, Masahito},
  journal={IEEE Transactions on Information Theory},
  volume={59},
  number={11},
  pages={7693--7710},
  year={2013},
  publisher={IEEE}
}

@book{coverbook,
  title={Elements of information theory},
  author={Cover, Thomas M},
  year={1999},
  publisher={John Wiley \& Sons}
}

@article{tomamichel2010duality,
  author={Tomamichel, Marco and Colbeck, Roger and Renner, Renato},
  journal={IEEE Transactions on Information Theory}, 
  title={Duality Between Smooth Min- and Max-Entropies}, 
  year={2010},
  volume={56},
  number={9},
  pages={4674-4681},
  doi={10.1109/TIT.2010.2054130}}

@article{khatri2020principles,
  title={Principles of quantum communication theory: A modern approach},
  author={Khatri, Sumeet and Wilde, Mark M},
  journal={arXiv preprint arXiv:2011.04672},
  year={2020}
}

@article{cheng2023error,
  title={Error exponent and strong converse for quantum soft covering},
  author={Cheng, Hao-Chung and Gao, Li},
  journal={IEEE Transactions on Information Theory},
  volume={70},
  number={5},
  pages={3499--3511},
  year={2023},
  publisher={IEEE}
}

@article{shen2024optimal,
  title={Optimal second-order rates for quantum soft covering and privacy amplification},
  author={Shen, Yu-Chen and Gao, Li and Cheng, Hao-Chung},
  journal={IEEE Transactions on Information Theory},
  year={2024},
  publisher={IEEE}
}

@article{han2002approximation,
  author={Han, T.S. and Verdu, S.},
  journal={IEEE Transactions on Information Theory}, 
  title={Approximation theory of output statistics}, 
  year={1993},
  volume={39},
  number={3},
  pages={752-772},
  doi={10.1109/18.256486}}

@book{cuff2009communication,
  title={Communication in networks for coordinating behavior},
  author={Cuff, Paul W},
  year={2009},
  publisher={Stanford University}
}

@article{wyner1975common,
  title={The common information of two dependent random variables},
  author={Wyner, Aaron},
  journal={IEEE Transactions on Information Theory},
  volume={21},
  number={2},
  pages={163--179},
  year={1975},
  publisher={IEEE}
}

@article{cuff2013distributed,
  title={Distributed channel synthesis},
  author={Cuff, Paul},
  journal={IEEE Transactions on Information Theory},
  volume={59},
  number={11},
  pages={7071--7096},
  year={2013},
  publisher={IEEE}
}

@inproceedings{cuff2016soft,
  title={Soft covering with high probability},
  author={Cuff, Paul},
  booktitle={2016 IEEE International Symposium on Information Theory (ISIT)},
  pages={2963--2967},
  year={2016},
  organization={IEEE}
}

@ARTICLE{yagli2019exact,
  author={Yagli, Semih and Cuff, Paul},
  journal={IEEE Transactions on Information Theory}, 
  title={Exact Exponent for Soft Covering}, 
  year={2019},
  volume={65},
  number={10},
  pages={6234-6262},
  doi={10.1109/TIT.2019.2917182}}

@article{wilde2012information,
  title={The information-theoretic costs of simulating quantum measurements},
  author={Wilde, Mark M and Hayden, Patrick and Buscemi, Francesco and Hsieh, Min-Hsiu},
  journal={Journal of Physics A: Mathematical and Theoretical},
  volume={45},
  number={45},
  pages={453001},
  year={2012},
  publisher={IOP Publishing}
}

@article{dupuis-thesis,
  title={The decoupling approach to quantum information theory},
  author={Dupuis, Fr{\'e}d{\'e}ric},
  journal={arXiv preprint arXiv:1004.1641},
  year={2010}
}

@article{winter2004extrinsic,
  title={‘‘Extrinsic’’and ‘‘Intrinsic’’Data in Quantum Measurements: Asymptotic Convex Decomposition of Positive Operator Valued Measures},
  author={Winter, Andreas},
  journal={Communications in mathematical physics},
  volume={244},
  pages={157--185},
  year={2004},
  publisher={Springer}
}

@article{ahlswede2002strong,
  title={Strong converse for identification via quantum channels},
  author={Ahlswede, Rudolf and Winter, Andreas},
  journal={IEEE Transactions on Information Theory},
  volume={48},
  number={3},
  pages={569--579},
  year={2002},
  publisher={IEEE}
}

@article{devetak2003classical,
  title={Classical data compression with quantum side information},
  author={Devetak, Igor and Winter, Andreas},
  journal={Physical Review A},
  volume={68},
  number={4},
  pages={042301},
  year={2003},
  publisher={APS}
}

@article{devetak2005distillation,
  title={Distillation of secret key and entanglement from quantum states},
  author={Devetak, Igor and Winter, Andreas},
  journal={Proceedings of the Royal Society A: Mathematical, Physical and engineering sciences},
  volume={461},
  number={2053},
  pages={207--235},
  year={2005},
  publisher={The Royal Society}
}

@inproceedings{winter2005secret,
  title={Secret, public and quantum correlation cost of triples of random variables},
  author={Winter, Andreas},
  booktitle={Proceedings. International Symposium on Information Theory, 2005. ISIT 2005.},
  pages={2270--2274},
  year={2005},
  organization={IEEE}
}

@article{cai2004quantum,
  title={Quantum privacy and quantum wiretap channels},
  author={Cai, Ning and Winter, Andreas and Yeung, Raymond W},
  journal={problems of information transmission},
  volume={40},
  pages={318--336},
  year={2004},
  publisher={Springer}
}

@article{bennett2014quantum,
  title={The quantum reverse Shannon theorem and resource tradeoffs for simulating quantum channels},
  author={Bennett, Charles H and Devetak, Igor and Harrow, Aram W and Shor, Peter W and Winter, Andreas},
  journal={IEEE Transactions on Information Theory},
  volume={60},
  number={5},
  pages={2926--2959},
  year={2014},
  publisher={IEEE}
}

@article{luo2009channel,
  title={Channel simulation with quantum side information},
  author={Luo, Zhicheng and Devetak, Igor},
  journal={IEEE Transactions on Information Theory},
  volume={55},
  number={3},
  pages={1331--1342},
  year={2009},
  publisher={IEEE}
}

@article{radhakrishnan2017one,
  title={One-shot private classical capacity of quantum wiretap channel: Based on one-shot quantum covering lemma},
  author={Radhakrishnan, Jaikumar and Sen, Pranab and Warsi, Naqueeb Ahmad},
  journal={arXiv preprint arXiv:1703.01932},
  year={2017}
}

@article{2001compression,
  title={Compression of quantum-measurement operations},
  author={Winter, Andreas and Massar, Serge},
  journal={Physical Review A},
  volume={64},
  number={1},
  pages={012311},
  year={2001},
  publisher={APS}
}

@article{massar2000amount,
  title={Amount of information obtained by a quantum measurement},
  author={Massar, Serge and Popescu, Sandu},
  journal={Physical Review A},
  volume={61},
  number={6},
  pages={062303},
  year={2000},
  publisher={APS}
}

@ARTICLE{parizi2016exact,
  author={Bastani Parizi, Mani and Telatar, Emre and Merhav, Neri},
  journal={IEEE Transactions on Information Theory}, 
  title={Exact Random Coding Secrecy Exponents for the Wiretap Channel}, 
  year={2017},
  volume={63},
  number={1},
  pages={509-531}}

@ARTICLE{yu2018renyi,
  author={Yu, Lei and Tan, Vincent Y. F.},
  journal={IEEE Transactions on Information Theory}, 
  title={Rényi Resolvability and Its Applications to the Wiretap Channel}, 
  year={2019},
  volume={65},
  number={3},
  pages={1862-1897}}

@ARTICLE{hayashi2012quantumwiretap,
  author={Hayashi, Masahito},
  journal={IEEE Transactions on Information Theory}, 
  title={Quantum Wiretap Channel With Non-Uniform Random Number and Its Exponent and Equivocation Rate of Leaked Information}, 
  year={2015},
  volume={61},
  number={10},
  pages={5595-5622}}

@book{hayashi-book,
  title={Quantum information theory},
  author={Hayashi, Masahito},
  year={2016},
  publisher={Springer}
}

@article{hayashi2006formula,
  title={General formulas for fixed-length quantum entanglement concentration},
  author={Hayashi, Masahito},
  journal={IEEE transactions on information theory},
  volume={52},
  number={5},
  pages={1904--1921},
  year={2006},
  publisher={IEEE}
}

@article{li2024reliability,
  title={Reliability function of quantum information decoupling via the sandwiched R{\'e}nyi divergence},
  author={Li, Ke and Yao, Yongsheng},
  journal={Communications in Mathematical Physics},
  volume={405},
  number={7},
  pages={160},
  year={2024},
  publisher={Springer}
}

@article{dupuis2023privacy,
  title={Privacy amplification and decoupling without smoothing},
  author={Dupuis, Fr{\'e}d{\'e}ric},
  journal={IEEE Transactions on Information Theory},
  volume={69},
  number={12},
  pages={7784--7792},
  year={2023},
  publisher={IEEE}
}

@article{horodecki2007quantum,
  title={Quantum state merging and negative information},
  author={Horodecki, Micha{\l} and Oppenheim, Jonathan and Winter, Andreas},
  journal={Communications in Mathematical Physics},
  volume={269},
  number={1},
  pages={107--136},
  year={2007},
  publisher={Springer}
}

@article{abeyesinghe,
    author = {Abeyesinghe, Anura and Devetak, Igor and Hayden, Patrick and Winter, Andreas},
    title = {The mother of all protocols: restructuring quantum information’s family tree},
    journal = {Proceedings of the Royal Society A: Mathematical, Physical and Engineering Sciences},
    volume = {465},
    number = {2108},
    pages = {2537-2563},
    year = {2009},
    month = {06}
}

@article{berta2011quantum,
  author  = {Berta, Mario and Christandl, Matthias and Renner, Renato},
  title   = {The Quantum Reverse Shannon Theorem Based on One-Shot Information Theory},
  journal = {Communications in Mathematical Physics},
  volume  = {306},
  number  = {3},
  pages   = {579--615},
  year    = {2011}
}

@article{majenz2017catalytic,
  title = {Catalytic Decoupling of Quantum Information},
  author = {Majenz, Christian and Berta, Mario and Dupuis, Fr\'ed\'eric and Renner, Renato and Christandl, Matthias},
  journal = {Phys. Rev. Lett.},
  volume = {118},
  issue = {8},
  pages = {080503},
  numpages = {6},
  year = {2017},
  month = {Feb},
  publisher = {American Physical Society}
}

@article{hayden2008decoupling,
    author = {Hayden, Patrick and Horodecki, Micha\l{} and Winter, Andreas and Yard, Jon},
    title = {A Decoupling Approach to the Quantum Capacity},
    journal = {Open Systems \& Information Dynamics},
    volume = {15},
    number = {01},
    pages = {7-19},
    year = {2008}
}

@article{wakakuwa2021oneshot,
  author  = {Wakakuwa, Etsuko and Nakata, Yoshifumi},
  title   = {One-Shot Randomized and Nonrandomized Partial Decoupling},
  journal = {Communications in Mathematical Physics},
  volume  = {386},
  number  = {1},
  pages   = {589--649},
  year    = {2021}
}

@book{vershynin-book,
  title={High-dimensional probability: An introduction with applications in data science},
  author={Vershynin, Roman},
  volume={47},
  year={2018},
  publisher={Cambridge university press}
}

@INPROCEEDINGS{yassaee2019almost,
  author={Yassaee, Mohammad H.},
  booktitle={Proc. 2019 IEEE International Symposium on Information Theory (ISIT)}, 
  title={{Almost Exact Analysis of Soft Covering Lemma via Large Deviation}}, 
  year={2019},
  pages={1387-1391},
  keywords={Entropy;Channel coding;Random variables;Upper bound;Memoryless systems;TV},
  doi={10.1109/ISIT.2019.8849341}
}

@ARTICLE{yu2025renyi,
  author={Yu, Lei},
  journal={IEEE Transactions on Information Theory}, 
  title={Rényi Resolvability, Noise Stability, and Anti-Contractivity}, 
  year={2025},
  volume={71},
  number={8},
  pages={5836-5867}
}

@article{li2025two,
  title={Two-Parameter R{\'e}nyi Information Quantities with Applications to Privacy Amplification and Soft Covering},
  author={Li, Shi-Bing and Li, Ke and Yu, Lei},
  journal={arXiv preprint arXiv:2511.02297},
  year={2025}
}

@ARTICLE{hayashi2025resolvability,
  author={Hayashi, Masahito and Cheng, Hao-Chung and Gao, Li},
  journal={IEEE Transactions on Information Theory}, 
  title={Resolvability of Classical-Quantum Channels}, 
  year={2025},
  volume={71},
  number={8},
  pages={6061-6074},
}

@article{matsuura2025universal,
  title={Universal classical-quantum channel resolvability and private channel coding},
  author={Matsuura, Takaya and Hayashi, Masahito and Hsieh, Min-Hsiu},
  journal={arXiv preprint arXiv:2510.02883},
  year={2025}
}

@article{takahashi2026classical,
  title={Classical-Quantum Channel Resolvability Using Matrix Multiplicative Weight Update Algorithm},
  author={Takahashi, Koki and Watanabe, Shun},
  journal={arXiv preprint arXiv:2601.12230},
  year={2026}
}

@inproceedings{watanabe2014strong,
  title={Strong converse and second-order asymptotics of channel resolvability},
  author={Watanabe, Shun and Hayashi, Masahito},
  booktitle={2014 IEEE International Symposium on Information Theory},
  pages={1882--1886},
  year={2014},
  organization={IEEE}
}

@article{liu2016e_,
  title={{$E_\gamma$-Resolvability}},
  author={Liu, Jingbo and Cuff, Paul and Verd{\'u}, Sergio},
  journal={IEEE Transactions on Information Theory},
  volume={63},
  number={5},
  pages={2629--2658},
  year={2016},
  publisher={IEEE}
}

@article{cheng2024joint,
  title={Joint state-channel decoupling and one-shot quantum coding theorem},
  author={Cheng, Hao-Chung and Dupuis, Fr{\'e}d{\'e}ric and Gao, Li},
  journal={arXiv preprint arXiv:2409.15149},
  year={2024}
}

@article{cheng2025sharp,
  title={Sharp estimates of quantum covering problems via a novel trace inequality},
  author={Cheng, Hao-Chung and Gao, Li and Hirche, Christoph and Huang, Hao-Wei and Liu, Po-Chieh},
  journal={arXiv preprint arXiv:2507.07961},
  year={2025}
}

@article{berta2009single,
  title={Single-shot quantum state merging},
  author={Berta, Mario},
  journal={arXiv preprint arXiv:0912.4495},
  year={2009}
}

@article{datta2012one,
  title={One-shot entanglement-assisted quantum and classical communication},
  author={Datta, Nilanjana and Hsieh, Min-Hsiu},
  journal={IEEE Transactions on Information Theory},
  volume={59},
  number={3},
  pages={1929--1939},
  year={2012},
  publisher={IEEE}
}

@article{datta2014limit,
  title={A limit of the quantum R{\'e}nyi divergence},
  author={Datta, Nilanjana and Leditzky, Felix},
  journal={Journal of Physics A: Mathematical and Theoretical},
  volume={47},
  number={4},
  pages={045304},
  year={2014},
  publisher={IOP Publishing}
}

@article{datta2013smooth,
  title={A smooth entropy approach to quantum hypothesis testing and the classical capacity of quantum channels},
  author={Datta, Nilanjana and Mosonyi, Milan and Hsieh, Min-Hsiu and Brandao, Fernando GSL},
  journal={IEEE transactions on information theory},
  volume={59},
  number={12},
  pages={8014--8026},
  year={2013},
  publisher={IEEE}
}

@article{yu2019simulation,
  title={Simulation of random variables under R{\'e}nyi divergence measures of all orders},
  author={Yu, Lei and Tan, Vincent YF},
  journal={IEEE Transactions on Information Theory},
  volume={65},
  number={6},
  pages={3349--3383},
  year={2019},
  publisher={IEEE}
}

@article{yu2018wyner,
  title={Wyner’s common information under R{\'e}nyi divergence measures},
  author={Yu, Lei and Tan, Vincent YF},
  journal={IEEE Transactions on Information Theory},
  volume={64},
  number={5},
  pages={3616--3632},
  year={2018},
  publisher={IEEE}
}

\end{document}